\journal{Journal of signal processing}
\theoremstyle{plain}
\newtheorem{thm}{\textbf{Theorem}}
\newtheorem{lem}{\textbf{Lemma}}
\newtheorem{prop}{\textbf{Proposition}}
\theoremstyle{definition}
\newtheorem{defn}{\textbf{Definition}}
\theoremstyle{remark}
\newtheorem*{rem}{\bf Remark}
\newcommand*{\rom}[1]{\expandafter\@slowromancap\romannumeral #1@}
\begin{document}

\begin{frontmatter}

\title{ Demixing Sines and Spikes Using Multiple Measurement Vectors}

\author[1]{Hoomaan Maskan \fnref{fn1}}
\ead{hoomaan.maskan@umu.se}
\author[2]{ Sajad Daei\fnref{fn2}}
\ead{Sajad.Daei@eurecom.fr}
\author[3]{Mohammad Hossein Kahaei\corref{cor1}\fnref{fn3}}
\ead{kahaei@iust.ac.ir}
\cortext[cor1]{Corresponding author}
\fntext[fn1]{Department of Mathemetics $\&$ Mathematical Statistics, Umeå University}
\fntext[fn2]{EURECOM- Communication Systems}
\fntext[fn3]{School of Electrical Engineering, Iran University of Science $\&$ Technology}

\begin{abstract}
We address the line spectral estimation problem with multiple measurement corrupted vectors. Such scenarios appear in many practical applications such as radar, optics, and seismic imaging in which the measurements can be modeled as the sum of a spectrally sparse and a block-sparse signal known as outlier. Our aim is to demix the two components and for this purpose, we design a convex problem whose objective function promotes both of the structures. Using the Positive Trigonometric Polynomials (PTP) theory, we reformulate the dual problem as a Semidefinite Program (SDP). Our theoretical results state that for a fixed number of measurements $N$ and constant number of outliers, up to $\mathcal{O}(N)$ spectral lines can be recovered using our SDP problem as long as a minimum frequency separation condition is satisfied. Our simulation results also show that increasing the number of samples per measurement vectors reduces the minimum required frequency separation for successful recovery.
\end{abstract}

\begin{keyword}
Spectral super resolution, demixing, multiple measurement vector, atomic norm, convex optimization.
\end{keyword}

\end{frontmatter}

\nolinenumbers

\section{Introduction}\label{section1}
Spectral super resolution is the problem of estimating the spectrum of a signal composed of sinusoids using finite number of samples. This problem, also known as line spectral estimation, is of great importance in signal processing applications such as radar \cite{Heckel2018,sayyari2020blind,jirhandeh2021super,daei2022blind}, multi-path channel estimation \cite{hezaveh2020ofdm}, seismic imaging \cite{Borcea_2002}, and magnetic resonance imaging \cite{Koochakzadeh2018}.

There exist three main attitudes toward spectral super resolution problems: non-parametric methods, parametric approaches \cite{stoica2005spectral}, and optimization-based methods \cite{bhaskar2013atomic,razavikia2019reconstruction,valiulahi2019two}. Periodogram as a non-parametric method can localize sinusoids up to a limited resolution \cite{harris1978use} in the noiseless case. Multiple Signal Classification (MUSIC) is a parametric method which can recover sinusoids perfectly  \cite{schmidt1986multiple}. However, the performance of this method degrades in the presence of noise or outliers. Also, MUSIC needs the correlation matrix of the signal and lack of measurements can highly affect the performance of MUSIC. Other examples of parametric approaches are Estimation of Signal Parameters via Rotational Invariance Technique (ESPRIT) \cite{roy1989esprit} and Matrix Pencil method  \cite{sarkar1995using}. Optimization based approaches minimize the continuous counterpart of the ${\ell_1}$ norm known as the Total Variation (TV) norm \cite{bhaskar2013atomic}. These methods are shown to be robust against Gaussian noise\cite{bhaskar2013atomic}. However, their performance degrades when outliers are present. Tang et.al. proposed a mathematical formulation for the spectral super resolution problem using Atomic Norm Minimization (ANM) \cite{tang2013compressed}. For more illustration, consider a time dispersive multipath channel. The problem is to estimate channel delays and the corresponding complex coefficients using a limited number of pilots. This problem is studied using spectral super resolution and ANM \cite{hezaveh2020ofdm,9066889}.

In most applications, an array of sensors is utilized to receive the signal. In real scenarios, the output of some sensors might be corrupted by perturbations and this makes it harder to super resolve the spectrum of the signal. Thus, the received signal can be described as a mixture of the transmitted signal and spiky noise. This noise can be due to the interference arising from other signals, lightning discharges, and sensor failures. The problem of estimating the transmitted signal from the latter mixture is known as demixing sines and spikes. The demixing problem using the single measurement vector (SMV) is studied in \cite{fernandez2017demixing} and \cite{bayat2020separating}. In some certain settings in applications, we are allowed to collect Multiple Measurement Vectors (MMVs). For example, in Direction Of Arrival (DOA) estimation in array processing \cite{park2018multiple}, the aim is to estimate the DOAs of narrowband sources by observing the output of a sensor array (a group of sensors) during a time window. As each sensor collects a measurement vector (takes snapshot) at each time instance, we have access to MMVs in a time interval. As mentioned earlier, sensors might be exposed to perturbations which can lead to corrupted measurements (interpreted as outliers). To jointly estimate the sources and the ourliers, one could use multiple disjoint SMV demixing problems (corresponding to multiple snapshots) or a single large SMV problem by increasing the array size. However, these approaches do not seem to be reasonable due to cost limitations and array physical constraints. Therefore, it is necessary to exploit the temporal redundancy contained in the MMVs by assuming that the sources remain fixed in a time interval.

In this work, the benefits of using MMVs in the demixing problem are investigated. It is shown that using MMVs makes it possible to localize the sines with high precision. According to the fact that the measurement vectors share the same spectral characteristic of the signal of interest, it is possible to use this joint spectral sparsity and distinguish the signal of interest from the outliers. According to the applied signal model, a new method for spectral super resolution in the presence of outliers is proposed. Also, due to the infinite dimensionality of the TV norm minimization problem, the dual problem is investigated. Using positive trigonometric polynomials (PTP) theory \cite{dumitrescu2007positive}, a tractable SDP is proposed. A vector dual polynomial is formed using the dual variables of the latter SDP. Also, a sufficient condition for the exact recovery of the proposed method is provided. \par

The rest of the paper is as follows: In Section \ref{section2} the demixing problem for the MMV case is formulated, in Section \ref{section3} the TV norm minimization is applied to distinguish the signal of interest from the outliers, in Section \ref{section4} the dual problem is investigated and a new SDP is proposed, in Section \ref{section_noise} dense Gaussian perturbation is added to the model and the corresponding SDP is proposed. Section \ref{section6} presents the numerical results, Section \ref{section7} provides the proof for the main theorem, and Section \ref{section8} is devoted to the conclusion and future work discussions. Also, the proof of the main theorem can be found in Section \ref{section9}. \par

\textbf{Notation}. Throughout this paper, scalars are denoted by lowercase letters, vectors by lowercase boldface letters, and matrices by uppercase boldface letters. The ${i}$th element of the vector ${\boldsymbol x}$ is given by ${\boldsymbol x_i}$. ${|.|}$ denotes cardinality for sets and absolute value for scalars. ${f^{(i)}(t)}$ denotes the ${i}$th derivative of ${f(t)}$ with respect to ${t}$. Transpose, conjugate, and hermitian of a matrix or vector are given by ${(.)^T}$, ${(.)^{\ast}}$, and ${(.)^H}$ respectively.
\section{Problem Formulation}\label{section2}

Suppose that the signal of interest is composed of ${K}$ complex exponentials
\begin{align}\label{eqn1}
s_{jl} & =\sum_{k=1}^{K}a_{kl}e^{i2\pi jf_k}\:\:\:,j\in \mathcal{N},l\in \mathcal{L},
\end{align}
where ${\mathcal{N}=\{0,\ldots,N-1\}}$, ${\mathcal{L}=\{1,\ldots,L\}}$, ${a_{kl}\in \mathbb{C}}$ is the complex amplitude corresponding to the ${k}$th frequency, ${i=\sqrt{-1}}$, ${N}$ is the length of the sinusoids, ${L}$ is the number of measurements or snapshots taken over time, and ${f_k\in\mathbb{T}}$ where ${\mathbb{T}:=\{f_1,\ldots,f_K\}\subset[0,1]}$ is the support set of the signal. In the Fourier domain, (\ref{eqn1}) can be expressed as
\begin{align}\label{eqn2}
G_l(f) & =\sum_{k=1}^{K}a_{kl}\delta(f-f_k),
\end{align}
where ${\delta(f-f_k)}$ is Dirac delta function located at ${f_k}$. The signal can be expressed in a matrix form ${\boldsymbol S}$ whose columns denote the measurements for one snapshot and the rows correspond to the output of each sensor for different snapshots. Note that we can write

\begin{align}
s_{jl} &= \sum_{k=1}^{K}a_{kl}e^{i2\pi jf_k}=\int_{0}^{1}e^{i2\pi jf}G_l(df)=(\mathcal{F}_NG_l)_j, \nonumber
\end{align}
where ${\mathcal{F}_N}$ maps the measure ${G_l}$ to its first N Fourier series coefficients. Here, we study the full measurement case. The results can be extended to the random sampling case \cite{yang2016exact}. \par

As stated in Section \ref{section1}, outliers degrade the performance of recent optimization-based spectral super resolution methods. In order to overcome this problem, the effect of the outliers should be considered in the initial model used for the received signal. Following the same approach of \cite{fernandez2017demixing}, the outliers are added to the received signal as a matrix ${\boldsymbol Z}$
\begin{align}\label{eqn3}
\boldsymbol Y & = \boldsymbol S + \boldsymbol Z = [(\mathcal{F}_NG_1),\ldots,(\mathcal{F}_NG_L)] +  \boldsymbol Z,
\end{align}
where ${\boldsymbol Y_{jl}}$ and ${\boldsymbol Z_{jl}}$ are the received signal and the outliers at ${j}$th sensor and ${l}$th snapshot respectively.{ Note that the outliers affect few number of sensors such that the outlier matrix $\bm{Z}_{N\times L}$ is considered to be row-sparse and ${\boldsymbol \Omega\subset\{0,\ldots,N-1\}}$ denotes the overall support set of the outliers showing the rows of $\bm{Z}_{N\times L}$ with nonzero $\ell_2$ norms.}

\section{Total Variation Norm Minimization}\label{section3}
Without any prior assumption, the demixing problem is ill-posed. Sparse assumption on the signal structure is proved to be helpful in solving linear inverse problems. In compressed sensing theory, \textit{Restricted-Isometry Property (RIP)} guaranteed that a random sampling operator would preserve most of the signal's energy with high probability (see e.g. \cite{daei2018optimal,daei2018sample,daei2019error,daei2019living} for more details). However, in spectral super resolution, it is possible that the non-zero spectral information of the signal lies in the null space of the sampling operator. Thus, an additional condition called the \textit{minimum separation condition} should be met\cite{fernandez2016super}.

\begin{defn}\label{definition1}
		(Minimum separation) Consider the set ${\mathbb{T}}$ as the set of support. The minimum separation is defined as the minimum  wrap-around distance between any elements of ${\mathbb{T}}$,$${\Delta:=\Delta(\mathbb{T})=\underset{(f_1,f_2)\in \mathbb{T}:f_1\neq f_2}{\inf}|f_2-f_1|}.$$
	{For clarification, the wrap-around distance between $f_1=\frac{1}{5}$ and $f_2=\frac{4}{5}$ is equal to $\frac{2}{5}$.}
\end{defn}
In compressed sensing theory, the ${\ell_{1,2}}$ norm was used to promote group sparsity of the received signals sharing the same support sets (see $e.g.$ \cite{daei2019distribution,daei2019exploiting}). The continuous counterpart of ${\ell_{1,2}}$ norm is the group Total Variation (gTV) norm
\begin{align}
\|\boldsymbol X\|_{\rm gTV}&:=\underset{\underset{\boldsymbol F:\mathbb{T}\rightarrow\mathbb{C}^L}{\|\boldsymbol F(t)\|_2\leq1,t\in\mathbb{T}}}{\sup}\sum_{l=1}^{L}Re\{\int_{\mathbb{T}}{\boldsymbol F_l^{H}(t)}\boldsymbol X_l(dt)\}.\nonumber
\end{align}

Fernandez proved that a minimum separation of ${\frac{2.52}{N-1}}$ has to be met so that the gTV norm minimization achieves exact recovery \cite{fernandez2016super}. Following the same insight of \cite{fernandez2017demixing}, we propose the following optimization problem for demixing in the MMV case
\begin{align}\label{eqn4}
\underset{\tilde{\boldsymbol G},\tilde{\boldsymbol Z}}{\min} & \|\tilde{\boldsymbol G}\|_{gTV}+\lambda\|\tilde{\boldsymbol Z}\|_{1,2}\:\:\: s.t.\:\: \boldsymbol Y = [\mathcal{F}_N\tilde{ G_1},\ldots,\mathcal{F}_N\tilde{ G_L}] + \tilde{\boldsymbol Z},
\end{align}
where ${\lambda>0}$ is a regularization parameter, ${\|.\|_{1,2}}$ denotes the matrix ${l_{1/2}}$ norm  and $\mathcal{F}_N$ is the linear operator mapping a vector to its $N$ lowest Fourier Coefficients. The main contribution of this paper is to show that under certain assumptions, the above problem has a unique solution.

	\begin{thm}\label{theorem1}
		Consider ${N}$ measurements with ${L}$ snapshots and suppose that the $\ell_2$ norm of each row in the outlier matrix $\boldsymbol Z$ is non-zero with probability $\frac{s}{N}$. Let the elements of the support set $\mathbb{T}$ satisfy the minimum separation condition of ${\Delta\ge \Delta_{min}=\frac{2.52}{N-1}}$.  If the phases of  ${\boldsymbol a_{l}}:=[a_{1,l},...,a_{K,l}]^T$ (complex amplitudes corresponding to $l$-th snapshot) and the nonzero entries of ${\boldsymbol Z}$ are i.i.d uniformly distributed in ${[0,2\pi]}$, then (\ref{eqn4}) with ${\lambda=1/\sqrt{N}}$ provides the exact solution with probability at least ${1-\epsilon}$ for any $\epsilon>0$ as long as
		 $$K<C_KN\bigg(\log\frac{N}{\epsilon}\bigg)^{-1}\bigg(1+\frac{1}{L}\log\frac{\sqrt{L}N^3}{\epsilon}\bigg)^{-1},$$
		$$s<C_sN\bigg(\log\frac{N}{\epsilon}\bigg)^{-1}\bigg(1+\frac{1}{L}\log\frac{\sqrt{L}N^3}{\epsilon}\bigg)^{-1},$$
		for some constants ${C_K}$, ${C_s}$,  and $N\geq2\times10^{3}$.
\end{thm}

\begin{rem}\label{remark1}
	Using MMVs leads to an increased probability of successful recovery. To see this, consider demixing the corresponding columns of $\bm S$ and $\bm Z$ in the SMV case \cite{fernandez2017demixing}. For a fixed N, each column of $\bm S$ and $\bm Z$ can be recovered from the corresponding column of $\bm Y$  with a probability of at least  $1-\epsilon$. Thus, in order to recover all the columns of $\bm S$, the probability of successful recovery would be at least $1-L\epsilon$. However, in order to solve the problem with a single optimization, as proposed in Theorem 1, the probability of successful recovery for weaker conditions on $N$ and $K$, is at least $1-\sqrt{L}\epsilon$. This explicitly certifies that the proposed method outperforms $L$ individual SMVs in terms of the success probability. It is also worth mentioning that the simulation results in Section \ref{section6} indicate that the MMV performance can actually be better than even a single SMV. This issue can also be captured by Theorem \ref{theorem1} since by increasing L, the conditions on $K, s$ would be weaker. This in turn shows that for fixed $K$ and $s$, the performance of our method enhances by increasing the number $L$ of snapshots.  It is also worth noting that in the SMV case ($L=1$), the bound in Theorem \ref{theorem1} reduces to the required sample complexity of SMV case obtained in \cite[Theorem 2.2]{fernandez2017demixing}.
\end{rem}
The proof of Theorem \ref{theorem1} appears in Section \ref{section7}. In Section \ref{section4} we look at the dual of (\ref{eqn4}) and reformulate it as an SDP.

\section{Dual Problem}\label{section4}
According to the infinite dimensionality of gTV norm in (\ref{eqn4}), we look at its dual formulation and analyze it. The proposed demixing problem (\ref{eqn4}) is closely related to the atomic norm minimization problem introduced in \cite{bhaskar2013atomic}. Using the fact that our signal of interest is composed of ${K}$ complex exponentials, we can present it sparsely with an atomic set containing ${N}$-dimensional sinusoids. The measurements of each snapshot or time sample form a measurement matrix as in (\ref{eqn3}). As a consequence, it is crucial that we use matrix form atoms to build up our signal. Consider the following atomic set { with $\|.\|_2$ denoting the $l_2$ norm}
$${\mathcal{A}=\{\boldsymbol a(f,\phi) \boldsymbol b^H:f\in[0,1],\phi\in[0,2\pi],\|\boldsymbol b\|_2=1\}}$$
for any ${\boldsymbol b\in \mathbb{C}^{L\times 1}}$ and
$${\boldsymbol a(f,\phi)=\frac{1}{\sqrt{N}}e^{i\phi}[1,e^{i2\pi f},\ldots,e^{i2\pi (N-1)f}]^T\in\mathbb{C}^{N}.}$$
Using the above definition of the atomic set, we can define the matrix ${\boldsymbol S}$ as

\begin{align}\label{eqn5}
\boldsymbol S & =\sqrt{N}\sum_{k=1}^{K}\boldsymbol a(f_k,\phi)\boldsymbol \psi_k^H=\sum_{k=1}^{K}c_k\boldsymbol a(f_k,\phi)\boldsymbol b_k^H,
\end{align}
where ${c_k=\sqrt{N}\|\boldsymbol \psi\|_2>0}$ and  ${\boldsymbol b_k=c_k^{-1}\boldsymbol \psi_k\sqrt{N}}$ with ${\|\boldsymbol b_k\|_2=1}$.
According to \cite{tang2013compressed,chandrasekaran2012convex}, spectral super resolution problem can be treated using atomic norm minimization. This attitude arises from the fact that in spectral super resolution problem the spectrum of the signal of interest is sparse. The atomic norm is defined as
\begin{align}\label{eqn6}
\|\boldsymbol X\|_{\mathcal{A}} & :=\inf\{t>0:\boldsymbol X\in tconv(\mathcal{A})\},
\end{align}
where ${conv(\mathcal{A})}$ denotes the convex hull of the atomic set ${\mathcal{A}}$.

Using the definition of the atomic norm, (\ref{eqn4}) can be represented as
\begin{align}\label{eqn7}
\underset{\tilde{\boldsymbol S},\tilde{\boldsymbol Z}}{\min} & \|\tilde{\boldsymbol S}\|_{\mathcal{A}}+\lambda\|\tilde{\boldsymbol Z}\|_{1,2}\:\:\: s.t.\:\: \boldsymbol Y =\tilde{\boldsymbol S} + \tilde{\boldsymbol Z}.
\end{align}
In order to formulate the dual problem, we need the definition of dual atomic norm as
\begin{align}
\|\boldsymbol \Gamma \|_{\mathcal{A}}^{\ast} & = \underset{\|\tilde{\boldsymbol S}\|_{\mathcal{A}}\leq1}{\sup} <\boldsymbol \Gamma,\tilde{\boldsymbol S}>_{\mathbb{F}} ,\nonumber\\
& = \underset{\underset{\|\boldsymbol b\|_2=1}{\underset{\phi\in[0,2\pi]}{f\in[0,1]}}}{\sup}<\boldsymbol \Gamma,e^{i\phi}\boldsymbol a(f,0)\boldsymbol b^H>_{\mathbb{F}} ,\nonumber\\
& = \underset{\underset{\|\boldsymbol b\|_2=1}{f\in[0,1]}}{\sup} |<\boldsymbol \Gamma,\boldsymbol a(f,0)\boldsymbol b^H>_{\mathbb{F}}|, \nonumber\\
& = \underset{f\in[0,1]}{\sup} \| \boldsymbol \Gamma^H\boldsymbol a(f)\|_2, \nonumber
\end{align}
where ${<.>_{\mathbb{F}}}$ shows the Frobenius inner product. Using the above definition, the dual of (\ref{eqn7}) is written as
\begin{eqnarray}\label{eqn9}
\underset{\boldsymbol \Gamma\in\mathbb{C}^{N\times L}}{\max}Re< {\boldsymbol Y}, \boldsymbol \Gamma >_{\mathbb{F}}\:  s.t. && \underset{f\in[0,1]}{\sup} \|{\boldsymbol \Gamma}^H \boldsymbol a(f,0)\|_2  \leq1,\nonumber \\
&& \|\boldsymbol \Gamma\|_{\infty,2}\leq\lambda,
\end{eqnarray}
where ${Re<.>}$ denotes the real part of the inner product and ${\|.\|_{\infty,2}}$ is the matrix infinity/2 norm defined as
$${\|\boldsymbol \Gamma\|_{\infty,2}=\underset{i}{\max}\|\boldsymbol \Gamma_{i,:}\|_2.}$$
By applying the PTP theory \cite{dumitrescu2007positive}, the maximization constraint in (\ref{eqn9}) can be reformulated as a Linear Matrix Inequality (LMI) given by
\begin{eqnarray}\label{eqn10}
\underset{\boldsymbol \Gamma\in\mathbb{C}^{N\times L},\boldsymbol \Lambda \in\mathbb{C}^{N\times N} }{\max}Re< {\boldsymbol Y}, \boldsymbol \Gamma >_{\mathbb{F}}\:  s.t. && \left[
\begin{array}{cc}
\boldsymbol \Lambda & \boldsymbol \Gamma \\
\boldsymbol \Gamma^{H} & \boldsymbol I_{L} \\
\end{array}
\right]\succeq0, \nonumber\\
&& \mathcal{T}^{\ast}(\boldsymbol \Lambda)=\left[
\begin{array}{c}
1 \\
\boldsymbol 0 \\
\end{array}
\right],
\nonumber\\
&& \|\boldsymbol \Gamma\|_{\infty,2}\leq\lambda,
\end{eqnarray}
where ${\mathcal{T^{\ast}}}$ is defined as
$${\mathcal{T}^{\ast}(\boldsymbol \Lambda)_j=\sum_{i=1}^{N-j+1}\boldsymbol \Lambda_{i,i+j-1},}$$
${\boldsymbol I_L}$ denotes the identity matrix of size ${L\times L}$, ${\boldsymbol 0\in\mathbb{C}^{N-1}}$ is a zero vector, and ${\succeq0}$ denotes positive semi-definiteness. \par

In order to localize the frequencies of the signal of interest and the noisy spikes, Lemma\ref{lemma1} is presented.
\begin{lem}\label{lemma1}
	The solution to (\ref{eqn7}) is unique if for ${\boldsymbol \Gamma\in\mathbb{C}^{N\times L}}$ and the vector-valued dual polynomial ${\boldsymbol Q=\boldsymbol a(f,0)^H\boldsymbol \Gamma}$, we have
	\begin{subequations}
		\begin{eqnarray}
		\boldsymbol Q(f_k)&=& \frac{c_k}{|c_k|}\boldsymbol b_k^{H}\:\:\: {  \text{ for }k \text{ s.t. }f_k\in\mathbb{T},}\label{eqn11}\\
		\|\boldsymbol Q(f_j)\|_2<1& &\:\:\:\forall f_j\in[0,1]\backslash \mathbb{T},\label{eqn12}
		\end{eqnarray}
		\text{and for any }${d}\in{\boldsymbol \Omega}$ \text{ and } ${l \in \boldsymbol \Omega^c}$,
		\begin{eqnarray}
		\boldsymbol \Gamma_{ d,:} &=& \lambda\frac{\boldsymbol Z_{d,:}}{\|\boldsymbol Z_{d,:}\|_2}, \label{eqn13}\\
		\|\boldsymbol \Gamma_{l,:}\|_{2} &<& \lambda.\label{eqn14}
		\end{eqnarray}
	\end{subequations}
\end{lem}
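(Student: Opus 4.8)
The plan is to prove Lemma~\ref{lemma1} by the classical dual-certificate argument for atomic-norm / group-sparse demixing, adapting the single-measurement analysis of \cite{fernandez2017demixing} to the matrix-valued setting. First I would set up Lagrange duality for (\ref{eqn7}): dualizing the constraint $\boldsymbol Y=\tilde{\boldsymbol S}+\tilde{\boldsymbol Z}$ with multiplier $\boldsymbol\Gamma$ and minimizing over $\tilde{\boldsymbol S},\tilde{\boldsymbol Z}$ reproduces exactly the problem (\ref{eqn9}), whose feasible set is $\{\boldsymbol\Gamma:\ \|\boldsymbol\Gamma\|_{\mathcal A}^{\ast}\le 1,\ \|\boldsymbol\Gamma\|_{\infty,2}\le\lambda\}$, using the formula $\|\boldsymbol\Gamma\|_{\mathcal A}^{\ast}=\sup_{f}\|\boldsymbol\Gamma^{H}\boldsymbol a(f,0)\|_2=\sup_f\|\boldsymbol Q(f)\|_2$ already derived above. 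The hypotheses then say precisely that the candidate $\boldsymbol\Gamma$ is dual feasible: $\|\boldsymbol\Gamma\|_{\mathcal A}^{\ast}=1$ by (\ref{eqn11})--(\ref{eqn12}) (the value $1$ is attained on $\mathbb T$ and strictly not exceeded off it, since $\|\frac{c_k}{|c_k|}\boldsymbol b_k^{H}\|_2=1$), and $\|\boldsymbol\Gamma\|_{\infty,2}=\lambda$ by (\ref{eqn13})--(\ref{eqn14}).

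Next I would certify optimality of the true pair $(\boldsymbol S,\boldsymbol Z)$ by matching primal and dual values. Using the atomic decomposition $\boldsymbol S=\sum_k c_k\boldsymbol a(f_k,0)\boldsymbol b_k^{H}$ together with (\ref{eqn11}), a direct computation gives $\langle\boldsymbol S,\boldsymbol\Gamma\rangle_{\mathbb F}=\sum_k c_k\,\boldsymbol Q(f_k)\boldsymbol b_k=\sum_k|c_k|=\|\boldsymbol S\|_{\mathcal A}$, while (\ref{eqn13}) gives $\langle\boldsymbol Z,\boldsymbol\Gamma\rangle_{\mathbb F}=\sum_{d\in\boldsymbol\Omega}\langle\boldsymbol Z_{d,:},\boldsymbol\Gamma_{d,:}\rangle=\lambda\sum_{d\in\boldsymbol\Omega}\|\boldsymbol Z_{d,:}\|_2=\lambda\|\boldsymbol Z\|_{1,2}$. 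Hence $\mathrm{Re}\langle\boldsymbol Y,\boldsymbol\Gamma\rangle_{\mathbb F}=\|\boldsymbol S\|_{\mathcal A}+\lambda\|\boldsymbol Z\|_{1,2}$, so by weak duality $(\boldsymbol S,\boldsymbol Z)$ attains the optimal value of (\ref{eqn7}) and $\boldsymbol\Gamma$ is dual optimal.

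For uniqueness, let $(\hat{\boldsymbol S},\hat{\boldsymbol Z})$ be any optimal solution. Then $\|\hat{\boldsymbol S}\|_{\mathcal A}+\lambda\|\hat{\boldsymbol Z}\|_{1,2}=\mathrm{Re}\langle\hat{\boldsymbol S}+\hat{\boldsymbol Z},\boldsymbol\Gamma\rangle_{\mathbb F}\le\mathrm{Re}\langle\hat{\boldsymbol S},\boldsymbol\Gamma\rangle_{\mathbb F}+\mathrm{Re}\langle\hat{\boldsymbol Z},\boldsymbol\Gamma\rangle_{\mathbb F}\le\|\hat{\boldsymbol S}\|_{\mathcal A}\|\boldsymbol\Gamma\|_{\mathcal A}^{\ast}+\lambda\|\hat{\boldsymbol Z}\|_{1,2}$, and equality throughout forces both Hölder-type inequalities to be tight. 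Tightness of the first together with the \emph{strict} bound (\ref{eqn12}) forces every atom in an optimal decomposition of $\hat{\boldsymbol S}$ to be located at a frequency in $\mathbb T$, i.e. the columns of $\hat{\boldsymbol S}$ lie in the span of $\{\boldsymbol a(f_k,0):f_k\in\mathbb T\}$; tightness of the second together with (\ref{eqn14}) forces every nonzero row of $\hat{\boldsymbol Z}$ to lie in $\boldsymbol\Omega$. Finally, $\boldsymbol E:=\hat{\boldsymbol S}-\boldsymbol S=\boldsymbol Z-\hat{\boldsymbol Z}$ simultaneously has all columns in the $K$-dimensional Fourier subspace spanned by $\{\boldsymbol a(f_k,0):f_k\in\mathbb T\}$ and row support inside $\boldsymbol\Omega$; since a nonzero combination of $K$ distinct Fourier atoms cannot vanish on more than $K-1$ of the $N$ coordinates while this one vanishes on $N-|\boldsymbol\Omega|$ of them, $K+|\boldsymbol\Omega|\le N$ (comfortably satisfied in the regime of Theorem~\ref{theorem1}) yields $\boldsymbol E=\boldsymbol 0$, so $\hat{\boldsymbol S}=\boldsymbol S$ and $\hat{\boldsymbol Z}=\boldsymbol Z$.

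The routine parts are the duality bookkeeping and the Hölder equalities; the delicate point is the last step, namely making rigorous that a matrix which is spectrally supported on $\mathbb T$ and row-supported on $\boldsymbol\Omega$ must vanish, and pinning down the exact relation among $K$, $|\boldsymbol\Omega|$ and $N$ that this needs. It should also be stressed that Lemma~\ref{lemma1} is only the deterministic ``if a certificate exists'' half of the story: actually constructing such a $\boldsymbol\Gamma$ (equivalently the vector dual polynomial $\boldsymbol Q$) under the minimum-separation condition and the randomness hypotheses is precisely what the proof of Theorem~\ref{theorem1} must accomplish, and that interpolation-and-bump construction, not the present lemma, is where the real work lies.
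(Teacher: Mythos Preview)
Your proposal is correct and follows essentially the same dual-certificate strategy as the paper: verify dual feasibility of $\boldsymbol\Gamma$ from (\ref{eqn11})--(\ref{eqn14}), check that the duality gap vanishes at the true $(\boldsymbol S,\boldsymbol Z)$ by computing $\mathrm{Re}\langle\boldsymbol Y,\boldsymbol\Gamma\rangle_{\mathbb F}=\|\boldsymbol S\|_{\mathcal A}+\lambda\|\boldsymbol Z\|_{1,2}$, and then argue uniqueness from the strict inequalities (\ref{eqn12}) and (\ref{eqn14}).

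The one place where you diverge is the uniqueness step. The paper argues by contradiction: it posits an alternative optimum $(\tilde{\boldsymbol S},\tilde{\boldsymbol Z})$ whose atomic/row support differs from $(\mathbb T,\boldsymbol\Omega)$ and shows, via (\ref{eqn12}) and (\ref{eqn14}), that $\langle\boldsymbol Y,\boldsymbol\Gamma\rangle_{\mathbb R}<\|\tilde{\boldsymbol S}\|_{\mathcal A}+\lambda\|\tilde{\boldsymbol Z}\|_{1,2}$, contradicting strong duality; the step ``different solution $\Rightarrow$ different support'' is justified only by the phrase ``independency of atoms in $\mathbb T$.'' You instead first force any optimal pair to have supports contained in $(\mathbb T,\boldsymbol\Omega)$ via tightness of the H\"older inequalities, and then finish with an explicit Vandermonde-type injectivity argument on $\boldsymbol E=\hat{\boldsymbol S}-\boldsymbol S$, isolating the arithmetic condition $K+|\boldsymbol\Omega|\le N$. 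Your route is a bit more explicit about why two feasible pairs with the same supports must coincide, and it surfaces the (mild) injectivity hypothesis that the paper leaves implicit; the paper's route is shorter but leans on that unproved ``independency'' claim. Either way, the substance is the same.
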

\begin{proof}
	If we find a ${\boldsymbol \Gamma}$ satisfying the above conditions, it is dual feasible. Consider ${\hat{\boldsymbol S}}$ and ${\hat{\boldsymbol Z}}$ as the solutions to (\ref{eqn7}). Then, we would have
	\begin{eqnarray}
	\|\hat{\boldsymbol S}\|_{\mathcal{A}} &\geq& \|\hat{\boldsymbol S}\|_{\mathcal{A}}\|{\boldsymbol \Gamma}\|_{\mathcal{A}}^{\ast}\nonumber \\
	&\geq& <\boldsymbol \Gamma,\hat{\boldsymbol S}>_{\mathbb{R}} \nonumber\\
	&=& <\boldsymbol \Gamma,\sum_{k=1}^{K}c_k\boldsymbol a(f_k,\phi_k)\boldsymbol b_k^H>_{\mathbb{R}} \nonumber\\
	&=& \sum_{k=1}^{K}Re\{c_k^{\ast}<{\boldsymbol \Gamma},\boldsymbol a(f_k,\phi_k)\boldsymbol b_k^H>\} \nonumber\\
	&=& \sum_{k=1}^{K}Re\{c_k^{\ast}<\boldsymbol b_k,\boldsymbol Q(f_k)^H>\}\nonumber\\
	&=& \sum_{k=1}^{K}Re\{c_k^{\ast}\frac{c_k}{|c_k|}\}\geq \|\hat{\boldsymbol S}\|_{\mathcal{A}}.\nonumber
	\end{eqnarray}
	Also,
	\begin{eqnarray}
	Re<\hat{\boldsymbol Y},{\boldsymbol \Gamma}> &=& Re<\hat{\boldsymbol S},{\boldsymbol \Gamma}>+Re<\hat{\boldsymbol Z},{\boldsymbol \Gamma
	}> \nonumber\\
	&=&   \|\hat{\boldsymbol S}\|_{\mathcal{A}}+\sum_{ d\in{\boldsymbol \Omega}}^{ }Re\{\hat{\boldsymbol Z_{d,:}^{\ast}},\boldsymbol \Gamma_{d,:} \} \nonumber\\
	&=&  \|\hat{\boldsymbol S}\|_{\mathcal{A}}+ \lambda\sum_{d\in\boldsymbol \Omega}^{ }Re\{\frac{\boldsymbol Z_{d,:}^{\ast}\boldsymbol Z_{d,:}}{\|\boldsymbol Z_{d,:}\|_2}\}=\lambda\|\boldsymbol Z\|_{1,2},\nonumber
	\end{eqnarray}\par
	\noindent where the last equality is derived using \ref{eqn13}. Therefore, we must have

${<\boldsymbol \Gamma,\hat{\boldsymbol S}>_{\mathbb{R}}= \|\hat{\boldsymbol S}\|_{\mathcal{A}}+\lambda\|\hat{\boldsymbol Z}\|_{1,2}}$. Thus, by strong duality, ${\hat{\boldsymbol S}}$ and ${\hat{\boldsymbol Z}}$ are primal optimal and ${\boldsymbol \Gamma}$ is dual optimal. To investigate uniqueness, we consider $\tilde{\boldsymbol S}=\sum_{k\in\widetilde{{\mathbb{T}}}}^{ }\tilde{c}_k\boldsymbol a(\tilde{f}_k,\tilde{\phi}_k)\tilde{\boldsymbol b}_k^H$ and ${\tilde{\boldsymbol Z}}$ with supports $\widetilde{{\mathbb{T}}}$ and $\widetilde{\Omega}$, respectively as the other optimal solutions to (\ref{eqn7}).
		Then, we get
		\begin{eqnarray}
		& & <\tilde{\boldsymbol Y},\boldsymbol \Gamma>_{\mathbb{R}}=<\tilde{\boldsymbol S},\boldsymbol \Gamma>_{\mathbb{R}}+<\tilde{\boldsymbol Z},\boldsymbol \Gamma>_{\mathbb{R}}\nonumber\\
		&=&\sum_{\tilde{f}_k\in\mathbb{T}\cap \widetilde{\mathbb{T}}}^{ }Re\{\tilde{c}_k<\tilde{b}_k,\boldsymbol Q(\tilde{f}_k)^H>\} \nonumber\\
		&+&\sum_{\tilde{f}_j\in\mathbb{T}^c\cap\widetilde{\mathbb{T}}}^{ }Re\{\tilde{c}_j<\tilde{b}_j,\boldsymbol Q(\tilde{f}_j)^H>\}\nonumber\\
		&+& \sum_{ d \in {\Omega}\cap \widetilde{{\Omega}}}^{ }Re\{<\tilde{\boldsymbol Z}_{d,:}^{\ast},\boldsymbol \Gamma_{d,:}>\}+\sum_{l\in{\Omega}^c\cap \widetilde{\Omega}}^{ }Re\{<\tilde{\boldsymbol Z}_{l,:}^{\ast},\boldsymbol \Gamma_{l,:}>\} \nonumber\\
		&\leq& \sum_{\tilde{f}_k\in\mathbb{T}\cap\widetilde{\mathbb{T}}}^{ }Re\{\tilde{c}_k\|\tilde{\boldsymbol b}_k\|_2\|\boldsymbol Q(\tilde{f}_k)\|_2 + \sum_{\tilde{f}_j\in\mathbb{T}^c\cap\widetilde{\mathbb{T}}}^{ }Re\{\tilde{c}_j\|\tilde{\boldsymbol b}_j\|_2\|\boldsymbol Q(\tilde{f}_j)\|_2\nonumber\\
		&+& \lambda\sum_{ d \in {\Omega}\cap\widetilde{\Omega}}^{ }Re\{\|\tilde{\boldsymbol Z}_{d,:}\|_2\} + \|\boldsymbol \Gamma_{ \Omega^c,:}\|_{\infty,2}\sum_{l\in{\Omega}^c\cap \widetilde{\Omega}}^{ }Re\{\|\tilde{\boldsymbol Z}_{l,:}\|_2\} \nonumber\\
		&<& \sum_{\tilde{f}_k\in\mathbb{T}\cap\widetilde{\mathbb{T}}}^{ }\tilde{c}_k\|\tilde{\boldsymbol b}_k\|_2   +\sum_{\tilde{f}_j\in\mathbb{T}^c\cap\widetilde{\mathbb{T}}}^{ }\tilde{c}_j\|\tilde{\boldsymbol b}_j\|_2   +\lambda\sum_{d \in
			\Omega\cap \widetilde{\Omega}}Re\{\|\tilde{\boldsymbol Z}_{d}\|_2\}\nonumber\\
		&+& \lambda\sum_{\boldsymbol l\in{\Omega}^c\cap \widetilde{\Omega}}^{ }Re\{\|\tilde{\boldsymbol Z}_{l,:}\|_2\} = \|\tilde{\boldsymbol S}\|_{\mathcal{A}}+\lambda\|\tilde{\boldsymbol Z}\|_{1,2},\nonumber
		\end{eqnarray}
		which contradicts the strong duality. Thus, all optimal solutions are solely supported on $\mathbb{T}$ and ${\Omega}$. Since the atoms in ${\mathbb{T}}$ and $\Omega$ are linearly independent, the pair $({\hat{\boldsymbol S}}, {\hat{\boldsymbol Z}})$ is the unique optimal solution to (\ref{eqn7}).
\end{proof}
\section{Demixing in Presence of Dense Perturbation}\label{section_noise}
In many practical scenarios ($e.g.$ Direction of Arrival (DOA) estimation), the presence of dense perturbations is unavoidable \cite{yang2017sparse}. When the received signal is perturbed with dense noise, one can modify (\ref{eqn4}) as a new optimization problem and the corresponding SDP. In this case, the received data is in the form of
\begin{align}
    \boldsymbol Y & = \boldsymbol S + \boldsymbol Z +\boldsymbol W= [(\mathcal{F}_NG_1),\ldots,(\mathcal{F}_NG_L)] +  \boldsymbol Z+\boldsymbol W,\nonumber
\end{align}
where ${\boldsymbol W\in \mathbb{C}^{N\times L}}$ is the additive noise matrix with $i.i.d.$ elements distributed as zero mean Gaussian distribution with standard deviation $\sigma$. Now, by modifying (\ref{eqn4}), we reach

	\begin{align}\label{eqn11noise}
	\underset{\tilde{\boldsymbol G},\tilde{\boldsymbol Z}}{\min} & \|\tilde{\boldsymbol G}\|_{gTV}+\lambda\|\tilde{\boldsymbol Z}\|_{1,2}\:\:\: s.t.\:\: \|\boldsymbol Y - [\mathcal{F}_N\tilde{ G_1},\ldots,\mathcal{F}_N\tilde{ G_L}] - \tilde{\boldsymbol Z}\|_F\leq \eta,
	\end{align}
	where $\eta$ is an upper-bound of $\|\bm{W}\|_F$. In what follows, we derive the dual problem of (\ref{eqn11noise}) and its corresponding semidefinite relaxation.
\begin{lem}
    The dual problem of (\ref{eqn11noise}) is

	\begin{eqnarray}\label{eqn12noisy}
	\underset{\boldsymbol \Gamma\in\mathbb{C}^{N\times L}}{\max}Re< {\boldsymbol Y}, \boldsymbol \Gamma >_{\mathbb{F}}-\eta\|\boldsymbol \Gamma\|_F\:  s.t. && \underset{f\in[0,1]}{\sup} \|{\boldsymbol \Gamma}^H \boldsymbol a(f,0)\|_2  \leq1,\nonumber \\
	&& \|\boldsymbol \Gamma\|_{\infty,2}\leq\lambda,
	\end{eqnarray}
which is equivalent to the following SDP,

	\begin{eqnarray}\label{eqn13noisy}
	\underset{\boldsymbol \Gamma\in\mathbb{C}^{N\times L},\boldsymbol \Lambda \in\mathbb{C}^{N\times N} }{\max}Re< {\boldsymbol Y}, \boldsymbol \Gamma >_{{F}}-\eta\|\boldsymbol \Gamma\|_F\:  s.t. && \left[
	\begin{array}{cc}
	\boldsymbol \Lambda & \boldsymbol \Gamma \\
	\boldsymbol \Gamma^{H} & \boldsymbol I_{L} \\
	\end{array}
	\right]\succeq0, \nonumber\\
	&& \mathcal{T}^{\ast}(\boldsymbol \Lambda)=\left[
	\begin{array}{c}
	1 \\
	\boldsymbol 0 \\
	\end{array}
	\right],
	\nonumber\\
	&& \|\boldsymbol \Gamma\|_{\infty,2}\leq\lambda,
	\end{eqnarray}
with $\boldsymbol 0\in \mathbb{C}^{n-1}$ being a vector of zeros.
\end{lem}
\begin{proof}
    Problem (\ref{eqn11noise}) can be reformulated as
   	\begin{align}
   	\underset{\tilde{\boldsymbol G},\tilde{\boldsymbol Z}}{\min}  \|\tilde{\boldsymbol G}\|_{gTV}+\lambda\|\tilde{\boldsymbol Z}\|_{1,2}\:\:\: s.t.\:\: &\|\boldsymbol Y - \boldsymbol U\|_{\mathbb{F}}^2\leq \eta^2\nonumber \\
   	& \boldsymbol U=[\mathcal{F}_N\tilde{ G_1},\ldots,\mathcal{F}_N\tilde{ G_L}] + \tilde{\boldsymbol Z}.\nonumber
   	\end{align}
The Lagrangian of the above problem is

	\begin{align}\label{eqn14noisy}
	\mathcal{L}(\tilde{\boldsymbol G},\tilde{\boldsymbol Z},\boldsymbol \Gamma)&=\|\tilde{\boldsymbol G}\|_{gTV}-\langle [\tilde{ G_1},\ldots,\tilde{ G_L}],\mathcal{F}_N^*\boldsymbol\Gamma\rangle_{{F}}+\lambda\|\tilde{\boldsymbol Z}\|_{1,2}-\langle\tilde{\boldsymbol Z},\boldsymbol\Gamma\rangle_F\nonumber\\
	& +\langle\boldsymbol U,\boldsymbol \Gamma\rangle_F+\nu\left(\|\boldsymbol Y-\boldsymbol U\|_F^2-\eta^2\right).
	\end{align}
Due to the first constraint in (\ref{eqn12noisy}), $\|\tilde{\boldsymbol G}\|_{gTV}-\langle [\tilde{ G_1},\ldots,\tilde{ G_L}],\mathcal{F}_N^*\boldsymbol\Gamma\rangle_{{F}}$ is minimized for $\tilde{\boldsymbol G}=0$ and because of the second constraint in (\ref{eqn12noisy}), $\lambda\|\tilde{\boldsymbol Z}\|_{1,2}-\langle\tilde{\boldsymbol Z},\boldsymbol\Gamma\rangle_F$ is minimized for $\tilde{\boldsymbol Z}=0$. Next,  considering the convexity of (\ref{eqn14noisy}), we evaluate its gradient w.r.t $\boldsymbol U$ and set the result to zero to get,
    \begin{align}
        & \frac{\partial\mathcal{L}}{\partial\boldsymbol U}=\boldsymbol\Gamma - 2\nu(\boldsymbol Y- \boldsymbol U)=0\rightarrow \boldsymbol U=\boldsymbol Y-\frac{\boldsymbol \Gamma}{2\nu}.\nonumber
    \end{align}
    Therefore,

    	\begin{align}
    	\langle \boldsymbol Y,\boldsymbol \Gamma \rangle_F-\frac{1}{2\nu}\|\Gamma\|_F^2+\nu\left(  \|\frac{\Gamma}{2\nu}\|_F^2-\eta^2  \right)= \langle \boldsymbol Y,\boldsymbol \Gamma \rangle_F-\frac{1}{4\nu}\|\Gamma\|_F^2-\nu\eta^2.
    	\end{align}
Since $\nu$ is positive, taking derivative w.r.t $\nu$, setting the result equal to zero, and plugging the result back leads to (\ref{eqn12noisy}) with the constraints used. Thus, (\ref{eqn12noisy}) is the dual problem of (\ref{eqn11noise}) and using the PTP theory \cite{dumitrescu2007positive}, (\ref{eqn13noisy}) is concluded.
    \end{proof}
\section{Numerical Results}\label{section6}
\subsection{Without Perturbation}
In this subsection, numerical experiments are presented to evaluate the performance of the method proposed in Section \ref{section4}. First, we investigate the constraints (\ref{eqn11}) and (\ref{eqn12}) on the dual polynomial and the constraints (\ref{eqn13}) and (\ref{eqn14}) on the dual variable. Using these constraints, one can localize the signal frequencies and the outliers' spikes. Next, the minimum required frequency separation for successful recovery in the MMV case is compared with the one needed in the SMV case. In all simulations of this subsection, the number of sensors or the signal length is set to ${N=50}$. In the first part of the simulations, the signal of interest ${\boldsymbol S\in\mathbb{C}^{N\times L}}$ has ${K=3}$ frequencies and the coefficients ${a_{kl}}$ are always drawn from a standard i.i.d complex Gaussian distribution. The outliers' spikes are considered to be in ${s=3}$ different random positions in each snapshot. For better visualization, it is assumed that outliers happen in each sensor only once. Figure (\ref{fig1}) depicts ${\|\boldsymbol Q(f)\|_2}$ for ${L=5}$ snapshots and ${\mathbb{T}=\{0.1,0.4,0.8\}}$. As it can be observed, the signal frequencies can be estimated by solving ${\|\boldsymbol Q(f)\|_2=1}$ for all ${f\in[0,1]}$. The outliers are localized in each receiving sensor using (\ref{eqn13}). We considered ${s=3}$ noisy spikes occurring randomly in each measurement without replacement. Thus, with ${L=5}$ we expect to detect ${15}$ outliers in the receiver. Figure (\ref{fig2}) depicts the result. As it turns out, Figure (\ref{fig2}) verifies the conclusion of Lemma \ref{lemma1}.
\begin{figure}[!t]
	\centering
	\includegraphics[scale=0.3]{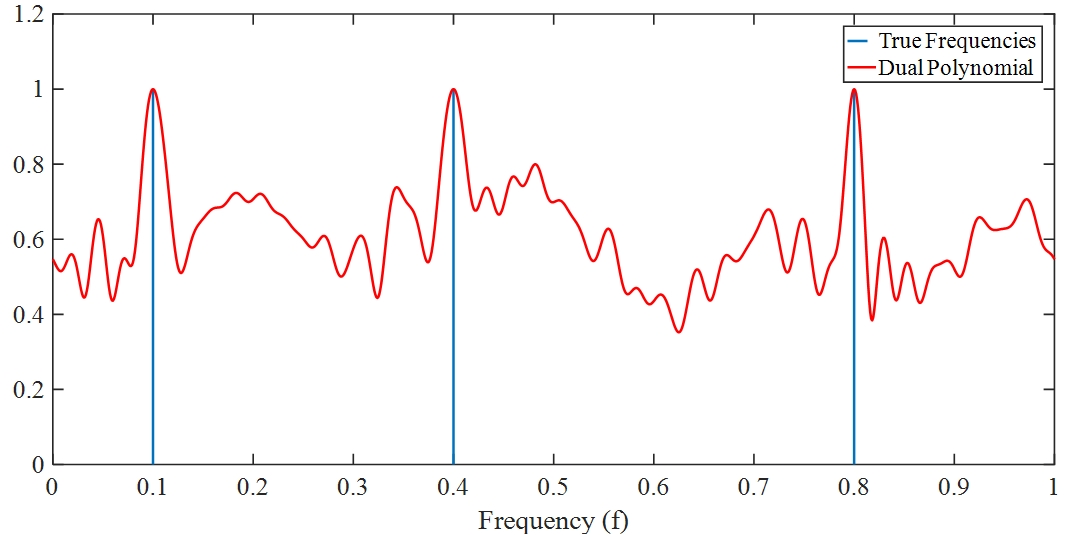}
	\caption  { ${\ell_2}$ norm of dual polynomial and true frequencies.}\label{fig1}
\end{figure}
\begin{figure}[!t]
	\centering
	\includegraphics[scale=0.35]{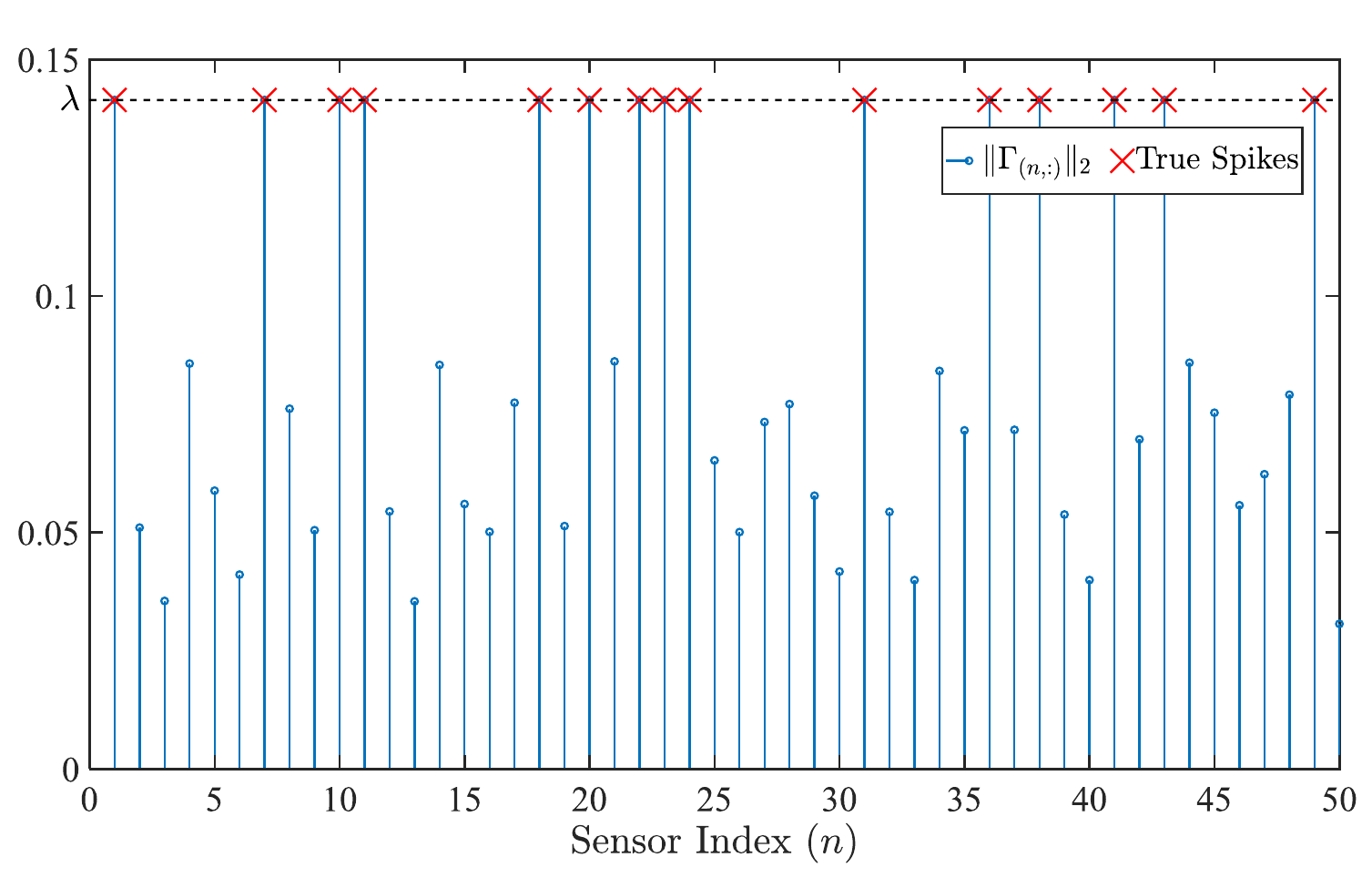}
	\caption{$\ell_2$ norm of  $\bm{\Gamma}$ rows in Lemma 1 in terms of sensor indices. The estimated spike locations are found by identifying indices where the $\ell_2$ norm of $\Gamma$ rows  achieves $\lambda$.
}\label{fig2}
\end{figure}

Next, we investigate the minimum separation condition. To do this, we consider two frequencies slowly taking distance. The first frequency is fixed at ${f_1=0.2}$ and the second one has a distance of ${f_{\delta}=\{0.1/N:0.1/N:1.5/N\}}$ from ${f_1}$. During this experiment, ${s=10}$ outliers out of ${N=50}$ are considered in the overall measurement process. Also, we define ${\boldsymbol f_{est}=[f_1^{est},f_2^{est}]}$ as the estimated frequencies vector. A successful estimation is defined as when
\begin{align}
& \max\{|\boldsymbol f_{est}-\boldsymbol f_{true}|\}\leq 10^{-4}
\end{align}
where ${\boldsymbol f_{true}}$ denotes the true frequencies. With this definition, Figure (\ref{fig3}) illustrates the probability of successful recovery for ${L=\{1,3,5\}}$ over ${100}$ Monte-Carlo simulations.
\begin{figure}[!t]
	\centering
	\includegraphics[scale=0.3]{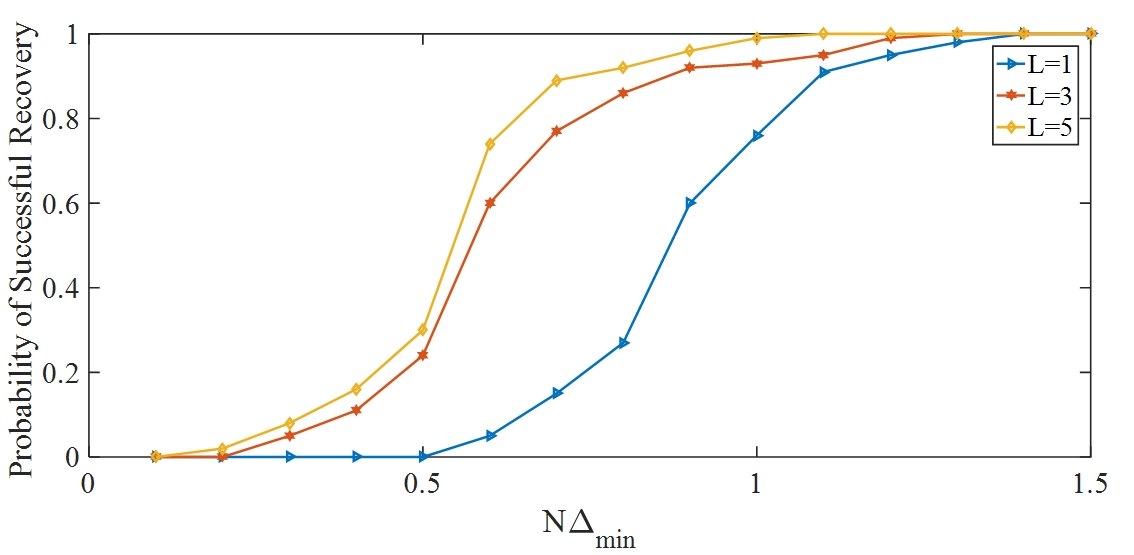}
	\caption{Probability of successful recovery in terms of minimum separation for various number of snapshots.}\label{fig3}
\end{figure}
As seen, the minimum required frequency separation is decreased with an increase in the number of snapshots.
\begin{figure}[!h]
	\centering
	\includegraphics[scale=0.32]{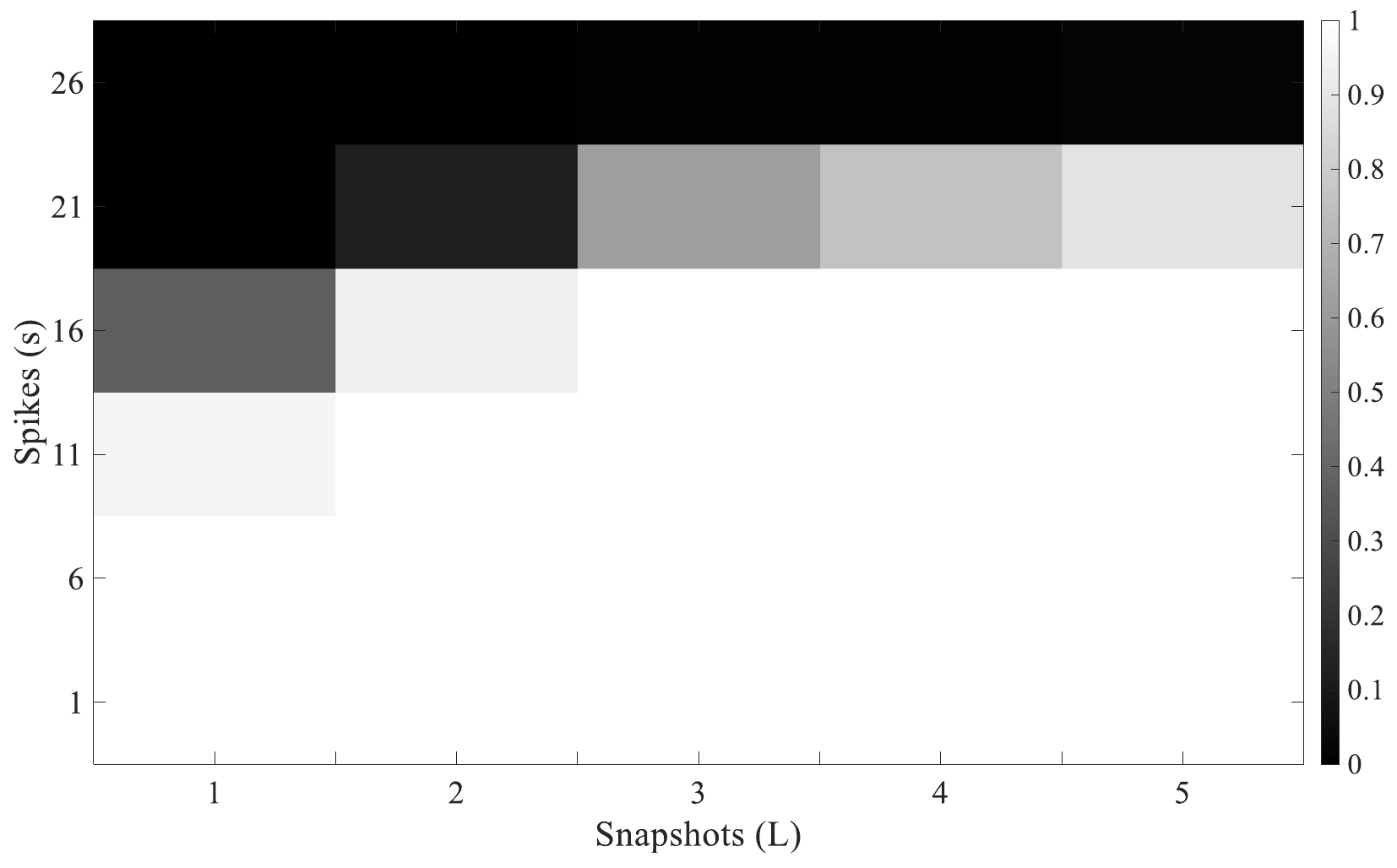}
	\caption{{ Performance of our atomic norm minimization method for diverse values of the number of snapshots (L) and outliers' spikes (s). The heat-map shows the probability of successful recovery (white: success black: failure).}}\label{fig41}
\end{figure}
{To more elaborate on Theorem \ref{theorem1}, the phase transition diagrams for $s$ and $K$ are plotted in Figures \ref{fig41} and \ref{fig4} for  $N=50, s=5$, and $N=50, K=5$, respectively. In each Monte-Carlo simulation, the frequencies are randomly generated satisfying the separation condition. The total number of Monte-Carlo runs is $100$. As Figure \ref{fig41} shows, increasing the number of snapshots can affect the maximum number of possible spikes to recover. However, this behaviour is up to a limited point. This is aligned with our bound shown in Theorem \ref{theorem1}. Moreover, Figure \ref{fig4} shows that for fixed $N$ and $K$, the probability of successful recovery increases when the number of snapshots ($L$) rises. Also, increasing the number of sources $K$ leads to recovery corruption for fixed $L$ and $N$. }

\begin{figure}[!t]
    \centering
    \includegraphics[scale=0.38]{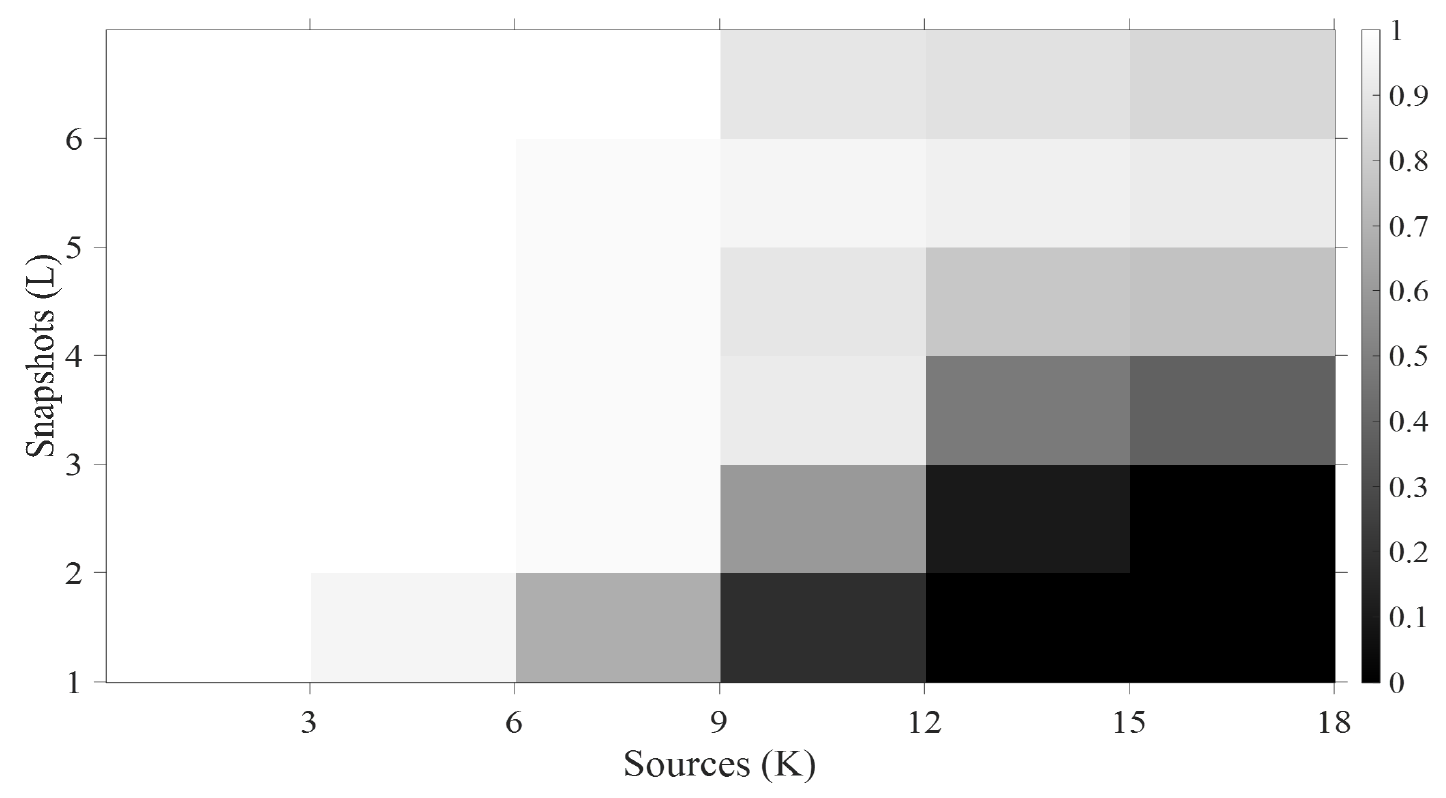}
    \caption{ Performance of our atomic norm minimization method for diverse values of the number of snapshots (L) and sources (K). The heat-map shows the probability of successful recovery (white: success black: failure).}
    \label{fig4}
\end{figure}
\subsection{With Perturbation}
In this subsection, various scenarios are considered and the results are compared with the state of the art SPA \cite{6857424} method. Note that comparison with conventional SPICE method \cite{5617289} is not implemented as this method was designed to estimate on-grid frequencies leading to basis mismatch issues and estimation inaccuracies. Therefore, we only compare our method with SPA \cite{6857424} which is often regarded as a continuous version of SPICE \cite{5617289}. In order to get closer to a more realistic scenario, consider the DOA estimation problem with $K=3$ sources where the first and third sources are considered coherent. Take the incoming directions to be $\boldsymbol f_{true}=[0.1,0.4,0.8]$ and the corresponding estimation vector to be $\boldsymbol f_{est}$. In what follows, we compare our proposed method with the SPA \cite{6857424} when both methods are perturbed with impulsive spiky and Gaussian dense noises.
\subsubsection{Effect of Taking Snapshots}
 To compare the performance of the two methods for different number of snapshots, we set the number of DOA sensors, $N$, to $50$ and the number of spikes, $s$, to $10$. The elements of the noise matrix $\bm{W}$ are considered to be $i.i.d.$ and distributed as Gaussian with zero mean and variance $0.5$. Also, the spectral norm of the spiky noise $\|\boldsymbol Z\|_F$ is set to $25$. This value does not affect the performance of the proposed method but devastates the performance of SPA. The number of snapshots, $L$, is ranged from 5 to 30 and the result is presented in Figure \ref{fig5} for 100 Monte-Carlo simulations. The error is measured in terms of the Mean Square Error (MSE) defined as
\begin{align}
    MSE=\frac{1}{K}\|\boldsymbol f_{est}-\boldsymbol f_{true}\|_2^2.\nonumber
\end{align}
\begin{figure}[!t]
    \centering
    \includegraphics[scale=0.38]{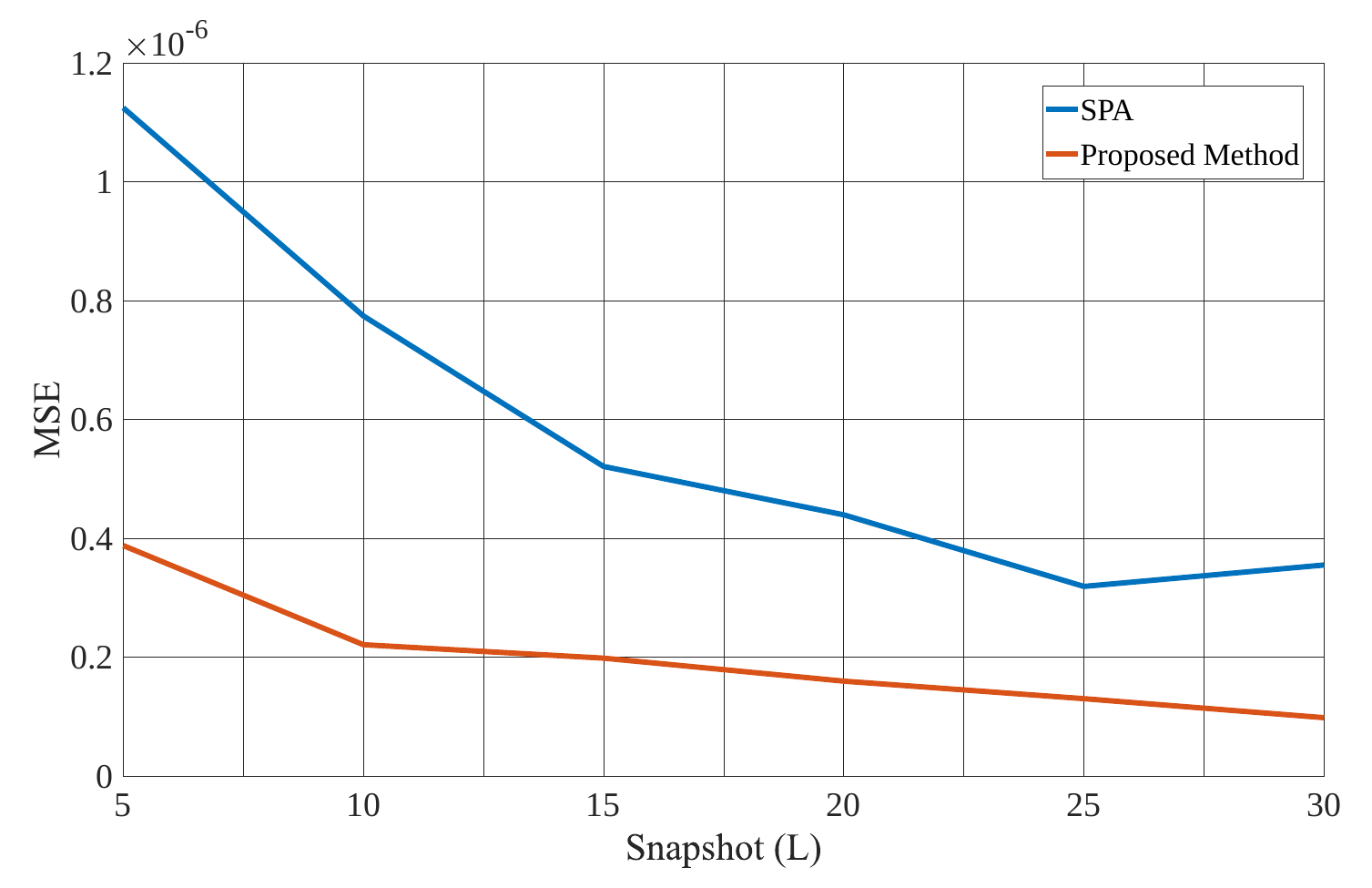}
    \caption{  Comparison of the performance of proposed method over SPA when the number of snapshots increases and the spikes energy is low.}
    \label{fig5}
\end{figure}
As Figure (\ref{fig5}) depicts, both methods show improvements as the number of snapshots increases. However, it is apparent that the proposed method has higher accuracy than the SPA method.
\subsubsection{Effect of Impulsive Noise Energy}
Here, we would like to investigate the performance of two methods for various levels of $\|\boldsymbol Z\|_F$ which is the Frobenius norm of $\boldsymbol Z$. The setting is similar to that of the previous subsection except that the number of snapshots is fixed to $10$. The result is shown in Figure (\ref{fig6}).
\begin{figure}[!t]
    \centering
    \includegraphics[scale=0.49]{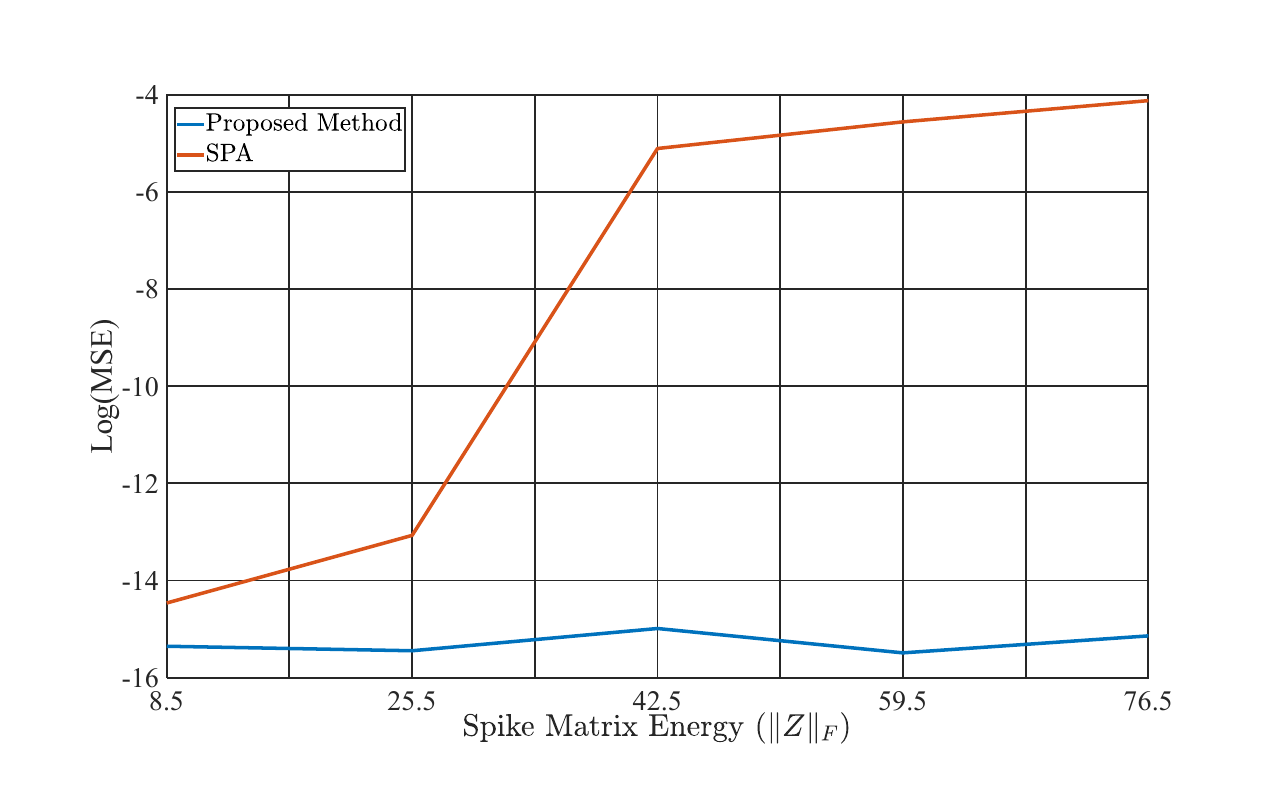}
    \caption{ Comparison of the performance of proposed method over SPA in terms of the energy of spikes $i.e.$ $\|Z\|_F$.}
    \label{fig6}
\end{figure}
As expected, the performance of the proposed method does not change considerably compared to the SPA in this scenario.  However, it can be observed that by increasing the energy of spikes, the performance of SPA is corrupted in such a way that after some threshold a successful recovery will be impossible.
\subsubsection{SNR}
This subsection focuses on the effect of Gaussian noise variance employed in $\bm{W}$ on the performance of both methods in terms of MSE. The simulation is based on the same setting as the first subsection except that the variance of each element in $\bm{W}$ $i.e.$ $\sigma^2$ varies from $0.5$ to $6.5$, $L=5$ and $s=5$ with the spikes energy level of $30$. The result is shown in Figure (\ref{fig7}).
\begin{figure}[!t]
    \centering
    \includegraphics[scale=0.38]{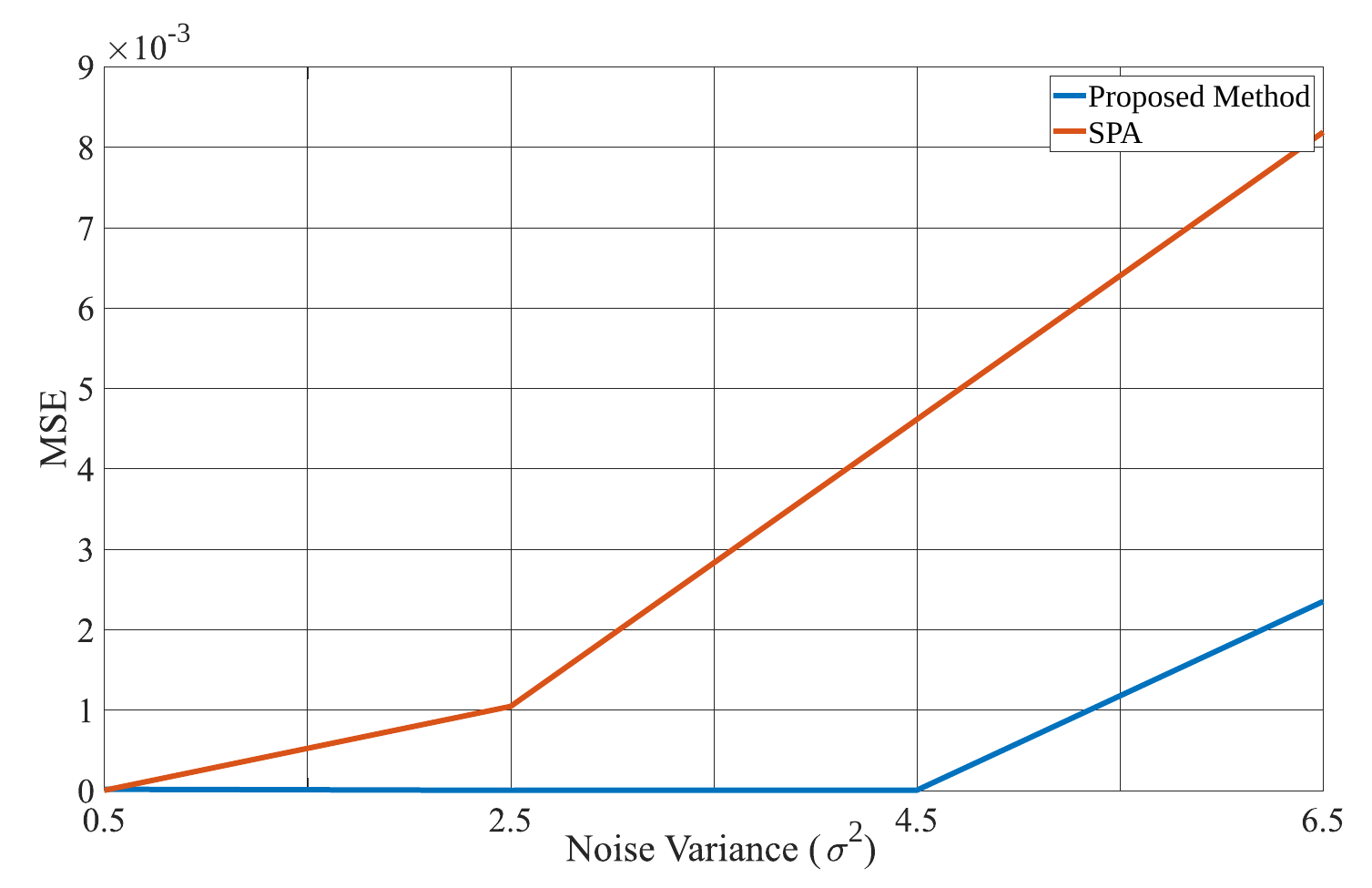}
    \caption{ Performance of the proposed method over SPA when the dense noise energy increases.}
    \label{fig7}
\end{figure}
 As seen, $\sigma^2$ can have a huge impact on the performance. Both methods tend to experience a threshold after which their MSEs increase dramatically. However, this threshold for the proposed method is significantly higher than the SPA, which indicates the greater robustness of the proposed method against Gaussian noise energy.
\section{Proof of Theorem 1}\label{section7}
In order to prove that problem (\ref{eqn7}) achieves exact demixing, we construct a trigonometric dual polynomial. Following the same line of \cite{fernandez2017demixing}, we apply the following kernel to build up the dual polynomial,
\begin{eqnarray}
\bar{K}(f)&:=& \mathcal{D}_{0.247m}(f)\mathcal{D}_{0.339m}(f)\mathcal{D}_{0.414m}(f) =\sum_{l=-m}^{m}c_le^{i2\pi lf},\nonumber\label{eqn16}
\end{eqnarray}
where ${N=2m+1}$, ${\boldsymbol c\in\mathbb{C}^N}$ is the convolution of the Fourier coefficients of the above kernels, and ${\mathcal{D}_m}$ is the Dirichlet kernel of order ${m>0}$ defined as
$${\mathcal{D}_m(f):=\frac{1}{N}\sum_{l=-m}^{m}e^{i2\pi lf}.}$$
According to the presence of outliers, conventional forms of dual polynomial can not be applied since the constraints (\ref{eqn14}) and (\ref{eqn13}) will not be met. Therefore, we use the randomized vector form of the dual polynomial presented in \cite{fernandez2017demixing} as
\begin{eqnarray}
\boldsymbol Q(f) &=& \boldsymbol Q_{aux}(f)+\boldsymbol R(f), \label{eqn161}
\end{eqnarray}
where
\begin{eqnarray}
\boldsymbol Q_{aux}(f) &=& \sum_{l\in \boldsymbol \Omega^c }^{ }\boldsymbol \Gamma_{l,:}e^{-i2\pi l f},\nonumber\label{eqn17} \\
\boldsymbol R(f) &=&\frac{1}{\sqrt{N}}\sum_{d\in \boldsymbol \Omega}^{}{  \boldsymbol r_{d,:}}e^{-i2\pi d f},\nonumber\label{eqn18}
\end{eqnarray}
where {  ${\boldsymbol r_{d,:}=\frac{\boldsymbol Z_{d,:}}{\|\boldsymbol Z_{d,:}\|_2}}$} and { ${\boldsymbol r\in\mathbb{C}^{s\times L}}$. }Note that (\ref{eqn13}) is immediately satisfied since ${\lambda=1/\sqrt{N}}$. Now we should build up the dual polynomial so that the other constraints in Lemma \ref{lemma1} are met. Using the same interpolation technique of \cite{fernandez2016super}, we set the value of the dual polynomial equal to ${\frac{c_k}{|c_k|}\boldsymbol b_k^{H}=h_k\boldsymbol b_k^{H}}$ at ${f_k\in\mathbb{T}}$ and set the derivative of the dual polynomial equal to zero at the same points. Setting the derivative to zero forces the dual polynomial to shape such that ${f_k}$ be a local extremum and bounds the value of the dual polynomial at these points. Thus, the following set of equations is formed for any ${f_k\in\mathbb{T}}$
\begin{subequations}
	\begin{eqnarray}
	\boldsymbol Q(f_k) &=& h_k\boldsymbol b_k^H, \label{eqn19}\\
	\boldsymbol Q_R^{(1)}(f_k)+i\boldsymbol Q_I^{(1)} &=& 0,\label{eqn20}
	\end{eqnarray}
\end{subequations}
where ${\boldsymbol Q_R^{(1)}}$ denotes the real part of the first derivative of ${\boldsymbol Q}$ and ${\boldsymbol Q_I}$ is the imaginary part of ${\boldsymbol Q}$. Using (\ref{eqn161}) in the above equations yields
\begin{subequations}
	\begin{align}
	&   \boldsymbol Q_{aux}(f_k)=h_k\boldsymbol b_k^H-\boldsymbol R(f_k),\label{eqn21}\\
	&  (\boldsymbol Q_{aux})_R^{(1)}(f_k)+i(\boldsymbol Q_{aux})_I^{(1)}(f_k)=-\boldsymbol R_R^{(1)}(f_k)-i\boldsymbol R_I^{(1)}(f_k).\label{eqn22}
	\end{align}
\end{subequations}
To interpolate ${\boldsymbol Q(f)}$ with ${\bar{K}(f)}$, we need to confine the kernel to ${\boldsymbol \Omega^c}$, as discussed for the missing data case in \cite{tang2013compressed}. Thus,
\begin{eqnarray}\label{eqn23}
K(f) &:=& \sum_{l\in\boldsymbol \Omega^c}^{}c_le^{i2\pi lf}=\sum_{l=-m}^{m}\delta_{\boldsymbol \Omega^c}(l)c_le^{i2\pi lf},
\end{eqnarray}
where ${\delta_{\boldsymbol \Omega^c}(l)}$ are Bernoulli random variables with parameter ${\frac{N-s}{N}}$. Therefore, ${\mathbb{E}K}$ is an scaled version of ${\bar{K}}$
\begin{eqnarray}
\mathbb{E}K(f) &=& \frac{N-s}{N}\sum_{l=-m}^{m}c_le^{i2\pi lf}=\frac{N-s}{N}\bar{K}(f).\label{eqn24}
\end{eqnarray}
The asymptotic behaviour of ${K(f)}$, ${\bar{K}(f)}$, and their derivatives is investigated in \cite{fernandez2016super}. With ${K(f)}$ restricted to ${\boldsymbol \Omega^c}$ we can express ${\boldsymbol Q_{aux}}$ in terms of ${K(f)}$ and its first derivative as
\begin{eqnarray}
\boldsymbol Q_{aux} &=& \sum_{k=1}^{K}\boldsymbol \alpha_kK(f-f_k)+\kappa\boldsymbol \beta_kK^{(1)}(f-f_k),\label{eqn25}
\end{eqnarray}
{ where ${\boldsymbol \alpha\in \mathbb{C}^{K\times L}}$ and ${\boldsymbol \beta\in \mathbb{C}^{K\times L}}$ are such that (\ref{eqn21}) and (\ref{eqn22}) are satisfied and ${\kappa:=1/\sqrt{\bar{K}^{(2)}(0)}}$. $\boldsymbol{\alpha}_k$ is the $k$th row of $\boldsymbol \alpha$ and $\boldsymbol{\beta}_k$ is the $k$th row of $\boldsymbol \beta$.}  The system of equations is then represented as
\begin{eqnarray}\label{eqn26}
\left[
\begin{array}{cc}
\boldsymbol D_0 & \boldsymbol D_1 \\
\boldsymbol D_1^T& \boldsymbol D_2 \\
\end{array}
\right]\left[
\begin{array}{c}
\boldsymbol \alpha \\
\boldsymbol \beta \\
\end{array}
\right]
&=& \left[
\begin{array}{c}
\boldsymbol \Phi \\
\boldsymbol 0 \\
\end{array}
\right]-\frac{1}{\sqrt{N}}\boldsymbol B_{\boldsymbol \Omega}\boldsymbol r,
\end{eqnarray}
where ${\boldsymbol 0\in\mathbb{C}^{K\times L}}$ is a zero matrix,${\boldsymbol \Phi_{k,:}=h_k\boldsymbol b_k^H}$,
\begin{eqnarray}
(D_0)_{jl} = K(f_j-f_l) ,(D_1)_{jl} = \kappa K^{(1)}(f_j-f_l) ,(D_2)_{jl} = -\kappa^2 K^{(2)}(f_j-f_l),\nonumber
\end{eqnarray}
$${\frac{1}{\sqrt{N}}\boldsymbol B_{\boldsymbol \Omega}\boldsymbol r=[\boldsymbol R(f_1),\ldots,\boldsymbol R(f_k),\boldsymbol R^{(1)}(f_1),\ldots
	,\boldsymbol R^{(1)}(f_k)]^T,}$$
$${\boldsymbol B_{\boldsymbol \Omega}=[\boldsymbol\nu(d_1),\ldots,\boldsymbol\nu(d_s)],}$$
\begin{eqnarray}
\boldsymbol\nu(g):=[e^{-i2\pi gf_1},\ldots &,& e^{-i2\pi gf_k}, \nonumber\\
i2\pi g\kappa e^{-i2\pi gf_1} &,& \ldots,i2\pi g \kappa e^{-i2\pi gf_k}]^T.  \nonumber
\end{eqnarray}
By solving (\ref{eqn26}), one can find ${\boldsymbol \alpha}$ and ${\boldsymbol \beta}$ and define ${\boldsymbol Q(f)}$ as
\begin{subequations}
	\begin{align}
	&\boldsymbol Q(f) = \sum_{k=1}^{K}\boldsymbol \alpha_kK(f-f_k)+\kappa\boldsymbol \beta_kK^{(1)}(f-f_k)+\boldsymbol R(f)\\\label{eqn27}
	&=\boldsymbol G_0^T(f)\boldsymbol D^{-1}\left(\left[
	\begin{array}{c}
	\boldsymbol \Phi \\
	\boldsymbol 0 \\
	\end{array}
	\right]-\frac{1}{\sqrt{N}}\boldsymbol B_{\boldsymbol \Omega}\boldsymbol r
	\right)+\boldsymbol R(f),
	\end{align}
\end{subequations}
where ${\boldsymbol G_p(f)}$ is defined as
\begin{align}
\boldsymbol G_p(f):=\kappa^p[K^{(p)}(f-f_1),\ldots,K^{(p)}(f-f_k),&\nonumber\\
\kappa K^{(p+1)}(f-f_1),\ldots,\kappa K^{(P+1)}(f-f_k)]^T&,
\end{align}
for ${p=0,1,2,\ldots}$ Now we should verify that the polynomial we formed above is guaranteed to be valid with high probability. If one can prove that ${\boldsymbol D^{-1}}$ exists, then (\ref{eqn26}) can be solved and (\ref{eqn11}) holds. Consider ${\bar{\boldsymbol D}}$ as the deterministic version of ${\boldsymbol D}$.  Lemma \ref{lem7} helps defining a condition under which ${\boldsymbol D^{-1}}$ exists and its deviation is bounded. We consider ${\varepsilon_{\boldsymbol D}^c}$ as the event in which ${\boldsymbol D^{-1}}$ exists with probability ${1-\epsilon/5}$ for ${\epsilon>0}$ under the assumption of Theorem \ref{theorem1}.
With this Lemma, one can conclude that in ${\varepsilon^c_{\boldsymbol D}}$ (\ref{eqn11}) holds. Note that (\ref{eqn13}) holds according to the definition of ${\boldsymbol Q(f)}$. All that remains is to prove (\ref{eqn12}) and (\ref{eqn14}). We use the results of Lemma \ref{lem6},\ref{lem8} below and Lemma 3.5 from \cite{fernandez2017demixing} which put bounds on the deviations of ${\boldsymbol B_{\boldsymbol \Omega}}$, ${\boldsymbol\nu(d)}$, and ${\boldsymbol G_p(f)}$, respectively. We use ${\varepsilon_{\boldsymbol B}^c}$ and ${\varepsilon_{\boldsymbol \nu}^c}$ as the events in which ${\boldsymbol B_{\boldsymbol \Omega}}$ and ${\boldsymbol \nu(d)}$ are bounded with probability at least ${1-\epsilon/5}$ under the assumption of Theorem \ref{theorem1}, respectively.
\begin{prop}\label{prop1}
	Under the assumption of Theorem \ref{theorem1} and conditioned on ${\varepsilon_{\boldsymbol B}^c\cap\varepsilon^c_{\boldsymbol D}\cap\varepsilon_{\boldsymbol \nu}^c}$, (\ref{eqn12}) holds with probability at least ${1-\epsilon/5}$.
\end{prop}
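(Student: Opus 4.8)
The plan is to show that the random vector-valued trigonometric polynomial $\boldsymbol Q$ built in (\ref{eqn27}) satisfies $\|\boldsymbol Q(f)\|_2<1$ off the support by comparing it, together with its first two derivatives, to the deterministic certificate generated by the kernel $\bar K$ and the deterministic matrix $\bar{\boldsymbol D}$, whose decay and boundedness are already established in \cite{fernandez2016super} under $\Delta(\mathbb{T})\ge\frac{2.52}{N-1}$. First I would partition $[0,1]\setminus\mathbb{T}$ into a \emph{near region} $\mathcal{N}_{\mathrm{near}}=\bigcup_{k}\{f:|f-f_k|\le c_1/N\}$ and a \emph{far region} $\mathcal{N}_{\mathrm{far}}$, its complement. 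I would also split $\boldsymbol Q=\boldsymbol Q_1+\boldsymbol Q_2$, where $\boldsymbol Q_1(f)=\sum_{k=1}^{K}\big(\boldsymbol G_0^T(f)\boldsymbol D^{-1}\big)_k h_k\boldsymbol b_k^H$ carries the dependence on the random amplitude phases/directions, and $\boldsymbol Q_2(f)=-\tfrac{1}{\sqrt N}\boldsymbol G_0^T(f)\boldsymbol D^{-1}\boldsymbol B_{\boldsymbol\Omega}\boldsymbol r+\boldsymbol R(f)$ carries the dependence on the random spike directions $\boldsymbol r_i$. Conditioned on $\boldsymbol\Omega$ and on $\varepsilon_{\boldsymbol B}^c\cap\varepsilon_{\boldsymbol D}^c\cap\varepsilon_{\boldsymbol\nu}^c$, each of $\boldsymbol Q_1,\boldsymbol Q_2$ and of $\boldsymbol Q_1^{(p)},\boldsymbol Q_2^{(p)}$ for $p=1,2$ is a sum of independent, mean-zero $\mathbb{C}^{1\times L}$ vectors whose norms are controlled by $\|\boldsymbol G_p(f)\|$, $\|\boldsymbol D^{-1}\|$ and $\|\boldsymbol B_{\boldsymbol\Omega}\|$, so a vector Bernstein inequality applies term by term.

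\textbf{Far region.} For a fixed $f\in\mathcal{N}_{\mathrm{far}}$ I would use the kernel estimates of \cite{fernandez2016super} to bound $\|\boldsymbol G_0(f)\|$ and (through $\varepsilon_{\boldsymbol D}^c$) $\|\boldsymbol D^{-1}\|$, hence the per-term variances, and conclude by vector Bernstein that $\|\boldsymbol Q_1(f)\|_2+\|\boldsymbol Q_2(f)\|_2\le 1-c_2$ except on an event of probability exponentially small in a quantity of the type $N/\max(K,s)$; the $1/\sqrt N$ prefactor in $\boldsymbol R$ together with $s<C_s(\log\tfrac{\sqrt L N}{\epsilon})^{-2}N$ makes the spike contribution lower order. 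I would then fix a grid $\mathcal{G}\subset[0,1]$ of size $\mathrm{poly}(N)$ (e.g. $N^3$ points), union bound over $\mathcal{G}$, and extend the estimate to all $f\in\mathcal{N}_{\mathrm{far}}$ by Bernstein's inequality for trigonometric polynomials: $\boldsymbol Q$ has degree $O(N)$, so $\|\boldsymbol Q^{(1)}\|_\infty\le CN\|\boldsymbol Q\|_\infty$ and the discretisation error is $O(1/N^2)$, absorbed into $c_2$.

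\textbf{Near region.} Here I would show that $f\mapsto\|\boldsymbol Q(f)\|_2^2$ is strictly concave on each interval $\{|f-f_k|\le c_1/N\}$. By the interpolation conditions (\ref{eqn19})--(\ref{eqn20}), $\|\boldsymbol Q(f_k)\|_2=1$ and the derivative of $\|\boldsymbol Q(f)\|_2^2$ vanishes at $f_k$, so a second-order Taylor expansion gives $\|\boldsymbol Q(f)\|_2^2\le 1+\tfrac12(f-f_k)^2\sup\big(2\,\mathrm{Re}\langle\boldsymbol Q,\boldsymbol Q^{(2)}\rangle+2\|\boldsymbol Q^{(1)}\|_2^2\big)$, and it suffices that the bracket be $\le -c_3N^2<0$ throughout the interval. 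I would get this by controlling $\boldsymbol Q,\boldsymbol Q^{(1)},\boldsymbol Q^{(2)}$ near $f_k$: up to the $\|\boldsymbol D^{-1}-\bar{\boldsymbol D}^{-1}\|$ perturbation from $\varepsilon_{\boldsymbol D}^c$ and the random sign fluctuations, these agree with the deterministic quantities assembled from $\bar K(0),\bar K^{(1)}(0),\bar K^{(2)}(0)$ studied in \cite{fernandez2016super}, for which the corresponding combination is negative with a fixed margin; the sign fluctuations and the $\boldsymbol R$-term are again bounded by vector Bernstein on a $\mathrm{poly}(N)$-grid and transferred to all $f$ by the polynomial-smoothness argument above.

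\textbf{Main obstacle.} The delicate point is the \emph{uniform} control over $f$: one must tune the grid spacing and the Bernstein tail parameters so that, after union-bounding over $\mathrm{poly}(N)$ grid points and over the $O(1)$ separate events for $\boldsymbol Q,\boldsymbol Q^{(1)},\boldsymbol Q^{(2)}$ in both regions, the total failure probability is still $\le\epsilon/5$. This is precisely where the hypotheses $K<C_K(\log\tfrac{\sqrt L N}{\epsilon})^{-2}N$, $s<C_s(\log\tfrac{\sqrt L N}{\epsilon})^{-2}N$ and $N\ge 2\times10^{3}$ are consumed, the logarithm coming from the grid/union bound and the $\sqrt L$ factor from the $\mathbb{C}^{L}$-valued Bernstein inequality. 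A secondary nuisance is that $\boldsymbol D^{-1}$, not merely $\bar{\boldsymbol D}^{-1}$, appears in $\boldsymbol Q$, so the bound from $\varepsilon_{\boldsymbol D}^c$ must be propagated through every estimate while checking that it does not erode the deterministic margins $c_2,c_3$.
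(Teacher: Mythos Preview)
Your global architecture (near/far split, grid plus Bernstein-polynomial transfer, concavity of $\|\boldsymbol Q\|_2^2$ in the near region) matches the paper. The gap is in the far-region step, and it is essential.

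You propose to bound $\|\boldsymbol Q_1(f)\|_2$ directly by a vector Bernstein/Hoeffding inequality applied to $\boldsymbol Q_1(f)=\sum_{k}\big(\boldsymbol P\boldsymbol D^{-1}\boldsymbol G_0(f)\big)_k\,h_k\boldsymbol b_k^{H}$. Conditioned on $\boldsymbol\Omega$, the coefficient vector is $\omega=\boldsymbol P\boldsymbol D^{-1}\boldsymbol G_0(f)$, and the concentration tail is $\exp(-t^2/8\|\omega\|_2^2)$. But for $f$ just outside the near region one has $|K(f-f_k)|$ close to $1$ for the nearest $f_k$, and since $\boldsymbol D^{-1}$ is only bounded by a fixed constant on $\varepsilon_{\boldsymbol D}^c$, $\|\omega\|_2$ is of order one, not $o\big((\log N)^{-1/2}\big)$. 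Hence after the union bound over a $\mathrm{poly}(N)$ grid you only obtain $\|\boldsymbol Q_1(f)\|_2\lesssim\sqrt{\log N}$, which is useless; your claimed tail ``exponentially small in $N/\max(K,s)$'' is not what the inequality delivers here. In short, the sign randomness of $\boldsymbol\Phi$ alone cannot make $\|\boldsymbol Q_1(f)\|_2<1$ uniformly in $f$.

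The paper resolves this by \emph{first} subtracting the deterministic certificate $\bar{\boldsymbol Q}$ built from $\bar K$ and $\bar{\boldsymbol D}$, and bounding $\|\bar{\boldsymbol Q}(f)\|_2<0.99$ on $A_{\mathrm{far}}$ \emph{deterministically} via the $\ell_\infty$-type kernel sums of \cite{fernandez2016super} (this holds for every sign pattern $\boldsymbol\Phi$, no concentration needed). Only the remainder $\boldsymbol Q-\bar{\boldsymbol Q}$ is controlled probabilistically, and it is decomposed into four pieces,
\[
\kappa^{\iota}\boldsymbol R^{(\iota)},\quad
\tfrac{1}{\sqrt N}\boldsymbol G_\iota^{T}\boldsymbol D^{-1}\boldsymbol B_{\boldsymbol\Omega}\boldsymbol r,\quad
\big(\boldsymbol G_\iota-\tfrac{N-s}{N}\bar{\boldsymbol G}_\iota\big)^{T}\boldsymbol D^{-1}\!\begin{bmatrix}\boldsymbol\Phi\\ \boldsymbol 0\end{bmatrix},\quad
\tfrac{N-s}{N}\bar{\boldsymbol G}_\iota^{T}\big(\boldsymbol D^{-1}-\tfrac{N}{N-s}\bar{\boldsymbol D}^{-1}\big)\!\begin{bmatrix}\boldsymbol\Phi\\ \boldsymbol 0\end{bmatrix},
\]
each of which has a coefficient vector whose $\ell_2$ norm is $O\big((\log\frac{\sqrt L N}{\epsilon})^{-1/2}\big)$ precisely because of the events $\varepsilon_{\boldsymbol D}^c,\varepsilon_{\boldsymbol B}^c,\varepsilon_{\boldsymbol\nu}^c$ (they encode $\boldsymbol G_\iota\approx\tfrac{N-s}{N}\bar{\boldsymbol G}_\iota$ and $\boldsymbol D^{-1}\approx\tfrac{N}{N-s}\bar{\boldsymbol D}^{-1}$). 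Only with these \emph{small} coefficients does vector Hoeffding survive the union bound and yield $\|\boldsymbol Q-\bar{\boldsymbol Q}\|_2\le 10^{-2}$ uniformly, after which $\|\boldsymbol Q\|_2\le 0.99+10^{-2}<1$. Your near-region sketch already invokes the comparison to the deterministic certificate; you need the same comparison in the far region, not a direct concentration bound on $\boldsymbol Q_1$.
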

\begin{proof}\label{prop1_proof}
	Consider ${\bar{\boldsymbol Q}(f)}$ as the dual polynomial constructed using ${\bar{\boldsymbol K}(f)}$. We can rewrite (\ref{eqn27}) in a more general form for ${\boldsymbol K(f)}$ and ${\bar{\boldsymbol K}(f)}$ as
	\begin{align}
	&\kappa^{\iota}\bar{\boldsymbol Q}^{(\iota)}(f)  := \kappa^{\iota}\sum_{j=1}^{K}\bar{\boldsymbol\alpha}_j\bar{\boldsymbol K}^{(\iota)}(f-f_j)+ \nonumber\\
	&  \kappa^{\iota+1}\sum_{j=1}^{K}\bar{\boldsymbol\beta}_j\bar{\boldsymbol K}^{(\iota+1)}(f-f_j)  =\bar{\boldsymbol G}_{\iota}(f)^T\bar{\boldsymbol D}^{-1}\left[
	\begin{array}{c}
	\boldsymbol \Phi \\
	\boldsymbol 0 \\
	\end{array}
	\right]\label{eqn28}
	\\
	&\kappa^{\iota}{\boldsymbol Q}^{(\iota)}(f) := \kappa^{\iota}\sum_{j=1}^{K}{\boldsymbol\alpha}_j{\boldsymbol K}^{(\iota)}(f-f_j)+ \nonumber \\ &\kappa^{\iota+1}\sum_{j=1}^{K}{\boldsymbol\beta}_j{\boldsymbol K}^{(\iota+1)}(f-f_j)+\kappa^{\iota}\boldsymbol R^{(\iota)}(f) \nonumber \\
	& ={\boldsymbol G}_{\iota}(f)^T{\boldsymbol D}^{-1}\left(\left[
	\begin{array}{c}
	\boldsymbol \Phi \\
	\boldsymbol 0 \\
	\end{array}
	\right]-\frac{1}{\sqrt{N}}\boldsymbol B_{\boldsymbol \Omega}\boldsymbol r\right)+\kappa^{\iota}\boldsymbol R^{(\iota)}(f).\label{eqn29}
	\end{align}
	We can also express (\ref{eqn29}) as
	\begin{align}
	& \kappa^{\iota}{\boldsymbol Q}^{(\iota)}(f) :=\kappa^{\iota}\bar{\boldsymbol Q}^{(\iota)}(f)+\kappa^{\iota}{\boldsymbol R}^{(\iota)}(f)-\frac{1}{\sqrt{N}}\boldsymbol G_{\iota}(f)^T\boldsymbol D^{-1}\boldsymbol B_{\boldsymbol \Omega}\boldsymbol r \nonumber\\
	& +(\boldsymbol G_{\iota}(f)-\frac{N-s}{N}\bar{\boldsymbol G}_{\iota}(f))^T\boldsymbol D^{-1}\left[
	\begin{array}{c}
	\boldsymbol \Phi \\
	\boldsymbol 0 \\
	\end{array}
	\right] \nonumber\\
	& +\frac{N-s}{N}\bar{\boldsymbol G}_{\iota}(f)^T(\boldsymbol D^{-1}-\frac{N}{N-s}\bar{\boldsymbol D}^{-1})\left[
	\begin{array}{c}
	\boldsymbol \Phi \\
	\boldsymbol 0 \\
	\end{array}
	\right].\nonumber
	\end{align}
	Noting that ${\|\boldsymbol Q(f)\|_2\leq\|\bar{\boldsymbol Q}(f)\|_2+\|\boldsymbol Q(f)-\bar{\boldsymbol Q}(f)\|_2}$, for (\ref{eqn12}) to hold, we should have ${\|\bar{\boldsymbol Q}(f)\|_2+\|\boldsymbol Q(f)-\bar{\boldsymbol Q}(f)\|_2\leq1}$.
	The following lemmas complete the proof.
	\begin{lem}\label{lemma2}
		Under the assumptions of Proposition \ref{prop1}, ${\|\boldsymbol Q(f)-\bar{\boldsymbol Q}(f)\|_2\leq10^{-2}}$.
	\end{lem}
	\begin{lem}\label{lemma3}
		Under the assumptions of Proposition \ref{prop1}, ${\|\bar{\boldsymbol Q}(f)\|_2<0.99}$. Also
		\begin{align}
		&\frac{1}{2}\frac{d^2\|\boldsymbol Q(f)\|_2}{df^2} =\|\boldsymbol Q^{\prime}\|_2^2+Re\{\boldsymbol Q^{\prime \prime}\boldsymbol Q^H(d)\}<0\label{eqn30},\\
		&\forall f\in A_{near}:=\{f||f-f_j|\leq0.09\:for f_j\in \mathbb{T}\}.\nonumber
		\end{align}
	\end{lem}
The proof of the above lemmas appear in Section \ref{section9}.
\end{proof}
Now, we prove (\ref{eqn14}) as the last step to prove Theorem \ref{theorem1}.
\begin{prop}\label{prop2}
	Under the assumption of Theorem \ref{theorem1} and conditioned on ${\varepsilon_{\boldsymbol B}^c\cap\varepsilon^c_{\boldsymbol D}\cap\varepsilon_{\boldsymbol \nu}^c}$, (\ref{eqn14}) holds with probability at least ${1-\epsilon/5}$.
\end{prop}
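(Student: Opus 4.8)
The plan is to run the argument of Proposition \ref{prop1} on the Fourier coefficients of $\boldsymbol Q_{aux}$ rather than on the polynomial itself. By (\ref{eqn17}) the rows $\{\boldsymbol\Gamma_{l,:}\}_{l\in\boldsymbol\Omega^{c}}$ are exactly the Fourier coefficients of $\boldsymbol Q_{aux}$, and by (\ref{eqn18}) the random part $\boldsymbol R(f)$ has its frequency support inside $\boldsymbol\Omega$; hence for $l\in\boldsymbol\Omega^{c}$ the $l$-th coefficient of $\boldsymbol Q=\boldsymbol Q_{aux}+\boldsymbol R$ is $\boldsymbol\Gamma_{l,:}$, and (\ref{eqn14}) is the assertion $\|\boldsymbol\Gamma_{l,:}\|_{2}<\lambda=1/\sqrt N$ for every such $l$. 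First I would read off the coefficients from (\ref{eqn25}): since $K(f)=\sum_{l\in\boldsymbol\Omega^{c}}c_{l}e^{i2\pi lf}$, the $l$-th coefficient of $\boldsymbol\alpha_{k}K(f-f_{k})+\kappa\boldsymbol\beta_{k}K^{(1)}(f-f_{k})$ is $c_{l}(\boldsymbol\alpha_{k}+2\pi il\kappa\boldsymbol\beta_{k})e^{-i2\pi lf_{k}}$ (up to a conjugation convention). So, writing $\boldsymbol\mu(l)$ for the vector built, exactly as $\boldsymbol\nu(\cdot)$ in Section \ref{section6}, from the phases $e^{\pm i2\pi lf_{k}}$ and the factor $\kappa l$ — which for $l\in\boldsymbol\Omega^{c}$ is \emph{deterministic}, being insensitive to the Bernoulli thinning — one has $\boldsymbol\Gamma_{l,:}=c_{l}\,\boldsymbol\mu(l)^{H}[\boldsymbol\alpha^{T},\boldsymbol\beta^{T}]^{T}$. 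Plugging in (\ref{eqn26}) and inserting $\pm\bar{\boldsymbol D}^{-1}$ yields the three-term decomposition, mirroring the one preceding (\ref{eqn29}),
\begin{align*}
\boldsymbol\Gamma_{l,:}&=\bar{\boldsymbol\Gamma}_{l,:}+c_{l}\,\boldsymbol\mu(l)^{H}\big(\boldsymbol D^{-1}-\bar{\boldsymbol D}^{-1}\big)\left[\begin{array}{c}\boldsymbol\Phi\\\boldsymbol 0\end{array}\right]\\
&\quad-\frac{c_{l}}{\sqrt N}\,\boldsymbol\mu(l)^{H}\boldsymbol D^{-1}\boldsymbol B_{\boldsymbol\Omega}\boldsymbol r,
\end{align*}
where $\bar{\boldsymbol\Gamma}_{l,:}:=c_{l}\,\boldsymbol\mu(l)^{H}\bar{\boldsymbol D}^{-1}\left[\begin{array}{c}\boldsymbol\Phi\\\boldsymbol 0\end{array}\right]$ is the $l$-th Fourier coefficient of the deterministic-kernel certificate $\bar{\boldsymbol Q}$. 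By the triangle inequality it suffices to bound the three terms by fractions of $\lambda$ that sum to less than $1$.

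The only genuinely new estimate is the one for $\bar{\boldsymbol\Gamma}_{l,:}$. Writing $\boldsymbol\mu(l)^{H}\bar{\boldsymbol D}^{-1}[\boldsymbol\Phi^{T},\boldsymbol 0^{T}]^{T}=\sum_{k=1}^{K}\gamma_{k}h_{k}\boldsymbol b_{k}^{H}$, the weights $\gamma_{k}$ (entries of $\boldsymbol\mu(l)^{H}\bar{\boldsymbol D}^{-1}$) are deterministic with $\|\boldsymbol\gamma\|_{2}\le\|\boldsymbol\mu(l)\|_{2}\,\|\bar{\boldsymbol D}^{-1}\|=\mathcal O(\sqrt K)$, using the kernel estimates $\|\boldsymbol\mu(l)\|_{2}=\mathcal O(\sqrt K)$, $|l|\kappa=\mathcal O(1)$ and $\|\bar{\boldsymbol D}^{-1}\|=\mathcal O(1)$ from \cite{fernandez2016super}, while $h_{k}$ and $\boldsymbol b_{k}$ are i.i.d.\ unit-modulus phases and unit-norm directions, independent of everything. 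A vector Hoeffding/Bernstein inequality of the type already used in \cite{fernandez2017demixing} gives $\|\sum_{k}\gamma_{k}h_{k}\boldsymbol b_{k}^{H}\|_{2}\lesssim\|\boldsymbol\gamma\|_{2}\sqrt{\log(\sqrt L N/\epsilon)}$ outside an event of probability $\mathcal O(\epsilon)$; combined with $|c_{l}|=\mathcal O(1/N)$ this gives $\|\bar{\boldsymbol\Gamma}_{l,:}\|_{2}\lesssim\sqrt K\,\sqrt{\log(\sqrt L N/\epsilon)}/N$, which under $K\le C_{K}(\log(\sqrt L N/\epsilon))^{-2}N$ and $N\ge 2\times10^{3}$ is at most $(1-c_{0})\lambda$ for an absolute $c_{0}>0$.

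The remaining two terms transcribe estimates already available. The term $c_{l}\boldsymbol\mu(l)^{H}(\boldsymbol D^{-1}-\bar{\boldsymbol D}^{-1})[\boldsymbol\Phi^{T},\boldsymbol 0^{T}]^{T}$ is handled as in Lemma \ref{lemma2}: conditioned on $\varepsilon^{c}_{\boldsymbol D}$, Bernoulli concentration of $\boldsymbol D$ around $\bar{\boldsymbol D}$ (Lemma 3.8 of \cite{fernandez2017demixing}) makes $\|\boldsymbol D^{-1}-\bar{\boldsymbol D}^{-1}\|$ small, and with $|c_{l}|=\mathcal O(1/N)$, $\|\boldsymbol\mu(l)\|_{2}=\mathcal O(\sqrt K)$ and $\|\boldsymbol\Phi\|=\mathcal O(\sqrt K)$ this is at most, say, $10^{-2}\lambda$. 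The outlier-sign term $\tfrac{c_{l}}{\sqrt N}\boldsymbol\mu(l)^{H}\boldsymbol D^{-1}\boldsymbol B_{\boldsymbol\Omega}\boldsymbol r$ is handled exactly like the corresponding term in Proposition \ref{prop1}: conditioned on $\varepsilon^{c}_{\boldsymbol B}\cap\varepsilon^{c}_{\boldsymbol D}\cap\varepsilon^{c}_{\boldsymbol\nu}$ the matrix $\boldsymbol\mu(l)^{H}\boldsymbol D^{-1}\boldsymbol B_{\boldsymbol\Omega}$ is fixed with controlled Frobenius norm (Lemmas 3.5–3.7 of \cite{fernandez2017demixing} bounding $\boldsymbol\nu(\cdot)$, $\boldsymbol B_{\boldsymbol\Omega}$ and $\boldsymbol D^{-1}$), and a vector Bernstein bound over the i.i.d.\ unit-norm rows of $\boldsymbol r$ makes this term at most $10^{-2}\lambda$ off an event of probability $\epsilon/5$. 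A union bound over the at most $N$ indices $l\in\boldsymbol\Omega^{c}$ (the extra $\log N$ absorbed into the $\log(\sqrt L N/\epsilon)$ factors) and over the concentration events, followed by the triangle inequality, gives $\|\boldsymbol\Gamma_{l,:}\|_{2}<\lambda$ simultaneously for all $l\in\boldsymbol\Omega^{c}$ with probability at least $1-\epsilon/5$, which is (\ref{eqn14}).

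\textbf{Main obstacle.} The hard part is the bound on $\bar{\boldsymbol\Gamma}_{l,:}$: the crude deterministic estimate here is only $\mathcal O(K/N)$, which would force $K=\mathcal O(\sqrt N)$; one must exploit the randomness of the signal phases $h_{k}$ and directions $\boldsymbol b_{k}$ to beat this down to $\mathcal O(\sqrt{K\log(\cdot)}/N)$ and then make it uniform over all $l\in\boldsymbol\Omega^{c}$ while keeping the total failure probability at $\epsilon/5$. This is precisely where the factor $(\log(\sqrt L N/\epsilon))^{-2}$ in the admissible range of $K$ in Theorem \ref{theorem1} gets consumed; everything else reuses the machinery already built for Proposition \ref{prop1} and Lemma \ref{lemma2}.
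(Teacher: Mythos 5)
Your proposal follows essentially the same route as the paper: both express $\boldsymbol\Gamma_{l,:}$ for $l\in\boldsymbol\Omega^c$ as $c_l\,\boldsymbol\nu(l)^H\boldsymbol D^{-1}\bigl([\boldsymbol\Phi^T,\boldsymbol 0^T]^T-\tfrac{1}{\sqrt N}\boldsymbol B_{\boldsymbol\Omega}\boldsymbol r\bigr)$, bound the deterministic weight vectors $P\boldsymbol D^{-1}\boldsymbol\nu(l)$ and $\boldsymbol B_{\boldsymbol\Omega}^H\boldsymbol D^{-1}\boldsymbol\nu(l)$ on the conditioning events via the cited lemmas, and then apply the vector-form Hoeffding inequality over the random sign matrices $\boldsymbol\Phi$ and $\boldsymbol r$ to improve the crude $\mathcal O(K/N)$ deterministic estimate to $\mathcal O(\sqrt{K\log(\cdot)}/N)<\lambda$, which is exactly the point you flag as the main obstacle. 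The only difference is cosmetic: you insert $\pm\bar{\boldsymbol D}^{-1}$ to split off a deterministic-kernel coefficient $\bar{\boldsymbol\Gamma}_{l,:}$ plus a difference term, whereas the paper bounds $\|P\boldsymbol D^{-1}\boldsymbol\nu(l)\|_2$ directly on $\varepsilon_{\boldsymbol D}^c$ without that extra decomposition (and note that your middle term must also be handled with the Hoeffding bound over $\boldsymbol\Phi$, not with the operator-norm bound $\|\boldsymbol\Phi\|=\mathcal O(\sqrt K)$, for $K$ near its maximal admissible value).
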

\begin{proof}\label{prop2_proof}
	We can express ${\boldsymbol \Gamma_{l,:}}$ as
	
	\begin{align}
	& \boldsymbol \Gamma_{l,:}=\sum_{j=1}^{K}c_l\boldsymbol \alpha_je^{i2\pi lf_j}+i2\pi l\kappa\sum_{j=1}^{K}\boldsymbol \beta_je^{i2\pi lf_j} \nonumber\\
	& =c_l\boldsymbol\nu(l)^H\left[
	\begin{array}{c}
	\boldsymbol \alpha \\
	\boldsymbol \beta \\
	\end{array}
	\right]=c_l\boldsymbol\nu(l)^H\boldsymbol D^{-1}\left(\left[
	\begin{array}{c}
	\boldsymbol \Phi \\
	\boldsymbol 0 \\
	\end{array}
	\right]-\frac{1}{\sqrt{N}}\boldsymbol B_{\boldsymbol \Omega}\boldsymbol r\right)\nonumber\\
	&=c_l\left(<P\boldsymbol D^{-1}\boldsymbol \nu(l),\boldsymbol \Phi>+\frac{1}{\sqrt{N}}<\boldsymbol B_{\boldsymbol \Omega}^H\boldsymbol D^{-1}\boldsymbol \nu(l),\boldsymbol r>\right).\label{eqn31}
	\end{align}
	We use the results from \cite{fernandez2017demixing} to bound ${\|P\boldsymbol D^{-1}\boldsymbol \nu(l)\|_2}$ and ${\boldsymbol B_{\boldsymbol \Omega}^H\boldsymbol D^{-1}\boldsymbol \nu(l)}$,
	\begin{subequations}
		\begin{align}
		& \|P\boldsymbol D^{-1}\boldsymbol \nu(l)\|_2^2\leq640K\leq\frac{0.18^2N}{log40/\epsilon}\:\:\:in\:\: \varepsilon_D^c\label{eqn32}\\
		& \|\boldsymbol B_{\boldsymbol \Omega}^H\boldsymbol D^{-1}\boldsymbol \nu(l)\|_2^2\leq640C_{\boldsymbol B}^2KN\leq\frac{0.18^2N^2}{log40/\epsilon}\:\:\:in\:\:\varepsilon_D^c\cap\varepsilon_B^c.\label{eqn33}
		\end{align}
	\end{subequations}
	By applying the vector form of the Hoeffding's inequality \cite{yang2016exact} with ${t=0.18\sqrt{N}}$ for (\ref{eqn32}) and ${t=0.18N}$ for (\ref{eqn33}), we can conclude that each term in (\ref{eqn31}) is greater than its corresponding ${t}$ with probability ${\epsilon/10}$. Thus,
	\begin{align}
	& \|\boldsymbol \Gamma_{l,:}\|_{\infty,2}\leq\nonumber\\
	&\|\boldsymbol c\|_{\infty}\left(\|\boldsymbol\nu(l)^H\boldsymbol D^{-1}P^T\boldsymbol \Phi\|_2+\frac{1}{\sqrt{N}}\|\boldsymbol\nu(l)^H\boldsymbol D^{-1}\boldsymbol B_{\boldsymbol \Omega}\boldsymbol r\|_2\right) \nonumber\\
	& \leq\frac{2.6}{N}(0.36\sqrt{N})=\frac{0.936}{\sqrt{N}}<\frac{1}{\sqrt{N}},\nonumber
	\end{align}
	with probability at least ${1-\epsilon/5}$.
\end{proof}
\section{Conclusion and Future Work}\label{section8}
The problem of demixing exponential form signals and outliers using MMVs was discussed. A new convex optimization problem was proposed to solve the demixing problem. It was shown that with the minimum frequency separation condition satisfied, there exists a dual polynomial which interpolates the sign pattern of the signal and helps estimating the signal frequencies. Also, the dual variable was utilized to localize the outliers in the receiver. \par
As an extension to this work, one can investigate the demixing problem using an arbitrary sampling scheme. This is the case when integer sampling is not possible. Also, the computational complexity of the available SDPs is high. For practical purposes, it is mandatory to reduce the computational complexity of the proposed method.
\section{Proofs}\label{section9}
\subsection{Proof of Lemma\ref{lemma2}}
First, we bound ${\|\kappa^{\iota}\boldsymbol Q^{(\iota)}(f)-\kappa^{\iota}\bar{\boldsymbol Q}^{(\iota)}(f)\|_2}$ on a grid. Then, the result is extended to the continuous domain ${[0,1]}$ and then (\ref{eqn12}) is proved. In order to bound ${\|\kappa^{\iota}\boldsymbol Q^{(\iota)}(f)-\kappa^{\iota}\bar{\boldsymbol Q}^{(\iota)}(f)\|_2}$, we can bound each term in
\begin{align}
& \|\kappa^{\iota}{\boldsymbol R}^{(\iota)}(f)\|_2+\|\frac{1}{\sqrt{N}}\boldsymbol G_{\iota}(f)^T\boldsymbol D^{-1}\boldsymbol B_{\boldsymbol \Omega}\boldsymbol r\|_2 \nonumber\\
& +\|(\boldsymbol G_{\iota}(f)-\frac{N-s}{N}\bar{\boldsymbol G}_{\iota}(f))^T\boldsymbol D^{-1}\left[
\begin{array}{c}
\boldsymbol \Phi \\
\boldsymbol 0 \\
\end{array}
\right]\|_2 \nonumber\\
& +\|\frac{N-s}{N}\bar{\boldsymbol G}_{\iota}(f)^T(\boldsymbol D^{-1}-\frac{N}{N-s}\bar{\boldsymbol D}^{-1})\left[
\begin{array}{c}
\boldsymbol \Phi \\
\boldsymbol 0 \\
\end{array}
\right]\|_2\label{eqn34}
\end{align}
on a grid ${\mathcal{G}}$ such that ${|\mathcal{G}|=200\sqrt{L}N^3}$ where ${|\mathcal{G}|}$ is the cardinality of ${\mathcal{G}}$. Since, ${\iota\in\{0,1,2,3\}}$, we are dealing with ${|\mathcal{U}|=4|\mathcal{G}|}$ points. To bound each term in (\ref{eqn34}), we leverage Lemma 4 of \cite{yang2018sample}, which is stated as follows.

	\begin{lem}[\cite{yang2018sample} Lemma 4]\label{lem4}
		Consider a matrix ${\boldsymbol \Psi\in\mathbb{C}^{K\times L}}$ with rows $\boldsymbol \{\Psi_k\}_{k=1}^K$ and the vector $\boldsymbol{0}\neq \boldsymbol{\omega}\in \mathbb{C}^K$. If the rows of ${\boldsymbol \Psi}$ are independently distributed on the complex hyper-sphere ${\mathbb{S}^{2L-1}}$, then for all $t > \|\boldsymbol{w}\|_2$, we have
		\begin{align}
		\mathbb{P}\{\|\sum_{k=1}^K\boldsymbol \omega_k \boldsymbol\Psi_k\|_2\geq t\}\leq e^{-L\left( \frac{t^2}{\|\omega\|_2^2}-log \frac{t^2}{\|\omega\|_2^2} -1\right)} & \:\forall\boldsymbol \omega\in\mathbb{C}^K,\boldsymbol\omega\neq0,t>0.\label{eqn35}
		\end{align}
	\end{lem}
	Each term in (\ref{eqn34}) is associated with an event ${\varepsilon_q}$ and ${q=\{1,2,3,4\}}$. For the ease of reading, we separate the proof of the bounds on each term.
{ 	\subsubsection{Bound on ${\varepsilon_1}$}}
	 The first term in (\ref{eqn34}) can be expressed as
	\begin{align}
	\kappa^{\iota}{\boldsymbol R}^{(\iota)}(f) & = \frac{\kappa^{\iota}}{\sqrt{N}}\sum_{d\in\boldsymbol \Omega}^{ }\boldsymbol r_{d,:}(i2\pi d)^{(\iota)}e^{-i2\pi df}\:\:\:(\iota)=\{0,1,2,3\}.\nonumber
	\end{align}
	Therefore, we define
	\begin{align}
	\varepsilon_1 & :=\{\|\kappa^{\iota}{\boldsymbol R}^{(\iota)}(f)\|_2\geq t\:\:\: for\:\: all\:\: f\in  |\mathbb{T}_{grid}|\}.\nonumber
	\end{align}
	By setting ${\boldsymbol\Psi=\boldsymbol r}$ and
	\begin{align}
	\boldsymbol \omega & =\frac{\kappa^{\iota}}{\sqrt{N}}\left[(i2\pi l_1)^{(\iota)}e^{i2\pi l_1f},\ldots,(i2\pi l_s)^{\iota}e^{i2\pi l_sf}\right]^T,\nonumber
	\end{align}
	in (\ref{eqn35})
	and using the union bound, we can conclude that
	\begin{align}
	\mathbb{P}\{\underset{f\in\mathcal{U}}{sup}\|\kappa^{\iota}\boldsymbol R^{(\iota)}(f)\|_2\geq t\} & \leq |\mathbb{T}_{grid}|e^{-L\left( \frac{t^2}{\|\omega\|_2^2}-\log \frac{t^2}{\|\omega\|_2^2} -1\right)}.\label{eqn351}
	\end{align}
	If we set
	\begin{align}\label{eqn352}
	\frac{t^2}{\|\omega\|_2^2}-\log \frac{t^2}{\|\omega\|_2^2} -1\geq \frac{1}{L}\log \frac{|\mathbb{T}_{grid}|}{\epsilon/20},
	\end{align}
	we get at most $\epsilon/20$ probability of occurrence for (\ref{eqn351}). By leveraging \cite[Lemma 5]{yang2018sample},  a sufficient condition for (\ref{eqn352}) to hold is
	\begin{align}
	&\frac{t^2}{\|\omega\|_2^2}\geq 2(1+\frac{1}{L}\log \frac{|\mathbb{T}_{grid}|}{\epsilon/20})\nonumber\\
	&\rightarrow \|\omega\|_2^2\leq t^2\left( 2(1+\frac{1}{L}\log \frac{|\mathbb{T}_{grid}|}{\epsilon/20}) \right)^{-1}\leq \frac{t^2}{2}\left(1+\frac{1}{L}\log \frac{|\mathbb{T}_{grid}|}{\epsilon/20})\right)^{-1}.
	\end{align}
	The above result combined with the bound \cite{fernandez2017demixing}
	\begin{align}
	\|\boldsymbol \omega\|_2^2 & \leq\frac{\kappa^{2\iota}}{N}(2\pi m)^{2\iota}s\leq\frac{\pi^6s}{N},
	\end{align}
	leads to the sufficient condition,
	$$s\leq \frac{N}{\pi^6} (1+\frac{1}{L} \log\frac{|\mathbb{T}_{grid}|}{\epsilon})^{-1},$$ which is actually satisfied by the second sufficient condition in Theorem \ref{theorem1} after setting ${t=\frac{10^{-2}}{8}}$ and $C_s$ small enough. Thus, one can conclude that the event ${\varepsilon_1}$ happens with probability at most ${\epsilon/20}$ under the assumptions of Proposition \ref{prop1}.
{ 	\subsubsection{Bound on ${\varepsilon_2}$}}
Following the same procedure as for ${\varepsilon_1}$, one can bound the second term in (\ref{eqn34}). Consider ${\boldsymbol \Psi=\boldsymbol r}$ and
\begin{align}
\boldsymbol \omega & =\frac{1}{\sqrt{N}}\boldsymbol G_{\iota}^T(f)\boldsymbol D^{-1}\boldsymbol B_{\boldsymbol \Omega}.\nonumber
\end{align}
Note that we can write
\begin{align}
\|\frac{1}{\sqrt{N}}\boldsymbol G_{\iota}^T(f)\boldsymbol D^{-1}\boldsymbol B_{\boldsymbol \Omega}\|_2 & \leq\frac{1}{\sqrt{N}}\|\boldsymbol B_{\boldsymbol \Omega}\|\|\boldsymbol D^{-1}\|\|\boldsymbol G_{\iota}(f)\|_2,\label{eqn36}
\end{align}
{ where $\|.\|$ denotes the operator norm.} Now, we should find the sufficient conditions for bounding each term of (\ref{eqn36}). {The bound for the terms $\|\boldsymbol D^{-1}\|$ and $\|\boldsymbol G_{\iota}(f)\|_2$ can be found below in Lemmas \ref{lem7} and \ref{lem8}, respectively.} Lemma \ref{lem5} will provide a new bound for $\|\boldsymbol B_{\boldsymbol \Omega}\|$.
\begin{lem}\label{lem5}
Under the assumptions of Theorem \ref{theorem1}, the event
\begin{align}
    \varepsilon_{\boldsymbol B}=\left\{ \|\boldsymbol B_{\boldsymbol \Omega}\|>C_{\boldsymbol B}\sqrt{N}\left( \log(\frac{N}{\epsilon})(1+\frac{1}{L}\log\frac{N^3\sqrt{L}}{\epsilon}) \right)^{-1/2}  \right\}\nonumber
\end{align}
will occur with probability at most $\epsilon/5$ for some constant $C_{\boldsymbol B}$.
\end{lem}
\begin{proof}
    Define $H:=\boldsymbol B_{\boldsymbol \Omega}\boldsymbol B_{\boldsymbol \Omega}^H$ which is
    $$H=\sum_{l\in\boldsymbol\Omega}\nu(l)\nu^*(l).$$
    The matrix $H$ is dissipated around $\bar{H}=\sum_{l=-m}^m\nu(l)\nu^*(l)$. Using the result of Lemma E.1 in \cite{fernandez2017demixing}, we have
    \begin{align}\label{eqn39_new}
        \|\bar{H}\|\leq 260\pi^2 N \log K.
    \end{align}
    Using the bound on $s$ from Theorem \ref{theorem1}, we can write
    $$s\leq C_sN(\bigg(\log\frac{N}{\epsilon}\bigg)^{-1}\bigg(1+\frac{1}{L}\log(\frac{\sqrt{L}N^3}{\epsilon})\bigg)^{-1}\leq  C_sN\big(\log K\big)^{-1}\bigg(1+\frac{1}{L}\log(\frac{\sqrt{L}N^3}{\epsilon})\bigg)^{-1}.$$
    Then from (\ref{eqn39_new}) and the above bound, we can bound $\|\frac{s}{N}\bar{H}\|$ as
    \begin{align}
        \|\frac{s}{N}\bar{H}\|\leq \frac{260\pi^2C_sN}{1+\frac{1}{L}\log\frac{\sqrt{L}N^3}{\epsilon}}=\frac{C_{\boldsymbol B}^2}{2}N(1+\frac{1}{L}\log\frac{\sqrt{L}N^3}{\epsilon})^{-1}.\nonumber
    \end{align}
    Now, we can control the deviation of $H$ from $\bar{H}$ using the following Lemma.
    \begin{lem}\label{lem6}
    Under the assumptions of Theorem \ref{theorem1}
    $$\|H-\frac{s}{N}\bar{H}\|\leq \frac{C_{\boldsymbol B}^2}{2}N(1+\frac{1}{L}\log\frac{\sqrt{L}N^3}{\epsilon})^{-1}$$
    with probability at least $1-\frac{\epsilon}{5}$.

    \end{lem}
    The proof of the above Lemma is given after the current proof. Using the result from Lemma \ref{lem6}, we have
    \begin{align}
    \|\boldsymbol B_{\boldsymbol \Omega}\|&\leq \sqrt{\|H\|}\leq \sqrt{\frac{s}{N}\|\bar{H}\|+\|H-\frac{s}{N}\bar{H}\|}\nonumber\\
    &\leq \sqrt{\frac{C_{\boldsymbol B}^2}{2}N(1+\frac{1}{L}\log\frac{\sqrt{L}N^3}{\epsilon})^{-1}+\frac{C_{\boldsymbol B}^2}{2}N(1+\frac{1}{L}\log\frac{\sqrt{L}N^3}{\epsilon})^{-1}}\nonumber\\
    &=C_{\boldsymbol B}\sqrt{N}(1+\frac{1}{L}\log\frac{\sqrt{L}N^3}{\epsilon})^{-1/2}
    \end{align}
     with probability at least $1-\frac{\epsilon}{5}$. This concludes the proof of Lemma \ref{lem5}.
\end{proof}
\begin{proof}[Proof of Lemma \ref{lem6}]
    Under the assumptions of Theorem \ref{theorem1}, one can write
     $$H=\sum_{l=-m}^m\delta_{\Omega}(l)\nu(l)\nu^*(l),$$
     where $\delta_{\Omega}(l),l=-m,...,m$ are i.i.d. Bernoulli random variables with parameter $\frac{s}{n}$. Next we can build zero-mean self adjoint matrices from $H$ as
     $$X_l:=(\delta_{\Omega}-\frac{s}{N})\nu(l)\nu^*(l),$$
     so that we can apply Matrix Bernstein inequality \cite{Tropp2012UserFriendlyTB}.
     \begin{thm}[Matrix Bernstein inequality \cite{Tropp2012UserFriendlyTB}]\label{thm2}
         Let $\{X_l\}$ be a finite sequence of independent zero-mean self-adjoint random matrices of dimension $d$ such that $\|X_l\|\leq B$ almost surely for a certain constant $B$. For all $t\geq 0$ and a positive constant $\sigma^2$
         \begin{align}\label{eqn41_new}
             \mathbb{P}\{\|\sum_{l=-m}^mX_l\|\geq t\}\leq de^{-\frac{t^2/2}{\sigma^2+Bt/3}}
         \end{align}
         for $\|\sum_{l=-m}^m\mathbb{E}(X_l^2)\|\leq \sigma^2$.
     \end{thm}
     In order to be able to apply the recent theorem on $X_l$, we need a bound on $\|X_l\|$. Using Lemma 3.5 in \cite{fernandez2017demixing}, we have
     $$\|X_l\|\leq \sup_{-m\leq l\leq m}\|\nu(l)\|_2^2\leq B:=10K.$$
     Also, to find the value for $\sigma^2$, we can write
     \begin{align}
        \sigma^2 &:=\|\sum_{l=-m}^m\mathbb{E}(X_l^2)\|=\|\sum_{l=-m}^m\mathbb{E}((\bar{\delta}(l)-\frac{s}{N}))^2\|\nu(l)\|_2^2\nu(l)\nu^*(l)\|\nonumber\\
        &\leq 10K\frac{s}{N}\|\bar{H}\|\leq 10C_{\boldsymbol B}^2NK(\log\frac{N}{\epsilon}(1+\frac{1}{L}\log\frac{\sqrt{L}N^3}{\epsilon}))^{-1}.\nonumber
     \end{align}
     Thus, if we set $t:=\frac{C_{\boldsymbol B}^2N}{2}(\log\frac{N}{\epsilon}(1+\frac{1}{L}\log\frac{\sqrt{L}N^3}{\epsilon}))^{-1}$, we can take $\sigma^2=20Kt$ in Theorem \ref{thm2}. This will yield
     $$\mathbb{P}\{ \|H-\frac{s}{n}\bar{H}\|\geq t \}\leq 2Ke^{\frac{-t^2/2}{\sigma^2+Bt/3}}=2Ke^{\frac{-3t}{140K}}.$$
     The above inequality will lead to the conclusion that to get the maximum probability of occurrence of $\epsilon/5$, we should have
     $$K\leq \frac{3C_{\boldsymbol B}^2N}{280}(\log\frac{10K}{\epsilon}(1+\frac{1}{L}\log\frac{N^3\sqrt{L}}{\epsilon}))^{-1},$$
     which is satisfied by the bound on $K$ in Theorem \ref{theorem1} if we set $C_K$ small enough.
\end{proof}

{
\begin{lem}\label{lem7}
Under the assumptions of Theorem \ref{theorem1}, the event,
\begin{align}
    \varepsilon_D :=\left\{ \|D-\frac{N-s}{N}\bar{D}\|\geq \frac{N-s}{4N} C_D\left( 1+\frac{1}{L}\log(\frac{|\mathbb{T}_{grid}|}{\epsilon}) \right)^{-\frac{1}{2}}\right\}\nonumber
\end{align}
occurs with probability at most $\epsilon/5$. Also, in the complement event $\varepsilon_D^c$, $D^{-1}$ exists and
$$\|D^{-1}\|\leq 8,\space \|D^{-1}-\frac{N}{N-s}\bar{D}^{-1}\|\leq C_D\left(  1+\frac{1}{L}\log(\frac{|\mathbb{T}_{grid}|}{\epsilon}) \right)^{-\frac{1}{2}},$$
where $C_D$ is a constant.
\end{lem}
\begin{proof}
    In order to prove, we use lemmas G.1 and G.3 from \cite{fernandez2017demixing}, which both hold in our setting. Next, using Lemma G.1 and triangle inequality, we can bound the smallest singular value of $D$ by a positive number which ensures invertibility of $D$. Then, by setting $A = D$ and $B=p\bar{D}$ in Lemma G.3, we get $\|D^{-1}\|\leq 8$ (note that $N/(N-s)\leq 2$ with $s\leq N/2)$. Next, we define
    $$X_l:= (p-\delta_{\Omega^c}(l))c_l\nu (l)\nu^* (l)$$
   for any $-m\leq l\leq m$ with $p=(N-s)/N$. Note that $\bar{D}=\sum_{l=-m}^m c_l\nu(l)\nu^*(l)$ and $D=\sum_{l=-m}^m\delta_{\Omega^c}(l)c_l\nu(l)\nu^*(l)$. Thus, $\mathbb{E}(X_l)=0$. Using the same calculation in \cite{fernandez2017demixing}(Lemmas 3.4, 3.5), $\|X_l\|\leq B:=\frac{12.6 K}{m}$. Also,
   $$\mathbb{E}(X_l^2)=p(1-p)c_l^2\|\nu(l)\|_2^2\nu(l)\nu^*(l),$$
   which leads to
   $\sum_{l=-m}^m \mathbb{E}(X_l^2)\leq \sigma^2 := \frac{18.5pk}{m}$
   as in \cite{fernandez2017demixing}. Now, set $t= \frac{p}{4}C_D'(1+\frac{1}{L}\log \frac{|\mathbb{T}_{grid}|}{\epsilon})^{-\frac{1}{2}}$ with $C_D'=\min\{1,C_D/4\}$ in  Theorem \ref{thm2} with $B$ and $\sigma ^2$ as defined above. Then, we get
   \begin{align}
       \frac{t^2/2}{\sigma^2+Bt/3}&=\frac{pm}{32K}C_D'^2\left[ 18.5(1+\frac{1}{L}\log \frac{|\mathbb{T}_{grid}|}{\epsilon}) + 1.05 C_D'(1+\frac{1}{L}\log \frac{|\mathbb{T}_{grid}|}{\epsilon})^{\frac{1}{2}}\right]^{-1}\nonumber\\
       &> \frac{pm}{32K}C_D'\left( 1+\frac{1}{L}\log(\frac{|\mathbb{T}_{grid}|}{\epsilon}) \right)^{-1}[18.05+1.05C_D']^{-1}\nonumber\\
       & = \frac{N-s}{K}C_D''\left( 1+\frac{1}{L}\log(\frac{|\mathbb{T}_{grid}|}{\epsilon}) \right)^{-1}.\nonumber
   \end{align}
   Theorem \ref{thm2} implies that
   $$\mathbb{P}\{\|D-p\bar{D}\|>t\}\leq 2Ke^{-\frac{N-s}{K}C_D''\left( 1+\frac{1}{L}\log(\frac{|\mathbb{T}_{grid}|}{\epsilon}) \right)^{-1}}.$$
   This probability is smaller than $\epsilon/5$ as long as
   $$K< NC_D''\left( 1+\frac{1}{L}\log(\frac{|\mathbb{T}_{grid}|}{\epsilon}) \right)^{-1}\left(  \log\frac{10K}{\epsilon} \right)^{-1}, \space s<N/2,$$
   which is the case by assumptions in Theorem \ref{theorem1} if $C_K$ and $C_s$ are small enough. This concludes the proof.
\end{proof}

\begin{lem}\label{lem8}
Consider the equispaced grid $\mathcal{G}\subset [0,1]$ with cardinality $|\mathbb{T}_{grid}|=200\sqrt{L}N^3$. Then, the event
\begin{align}
    \varepsilon_G :=\left\{ \|G_{\iota}(f)-\frac{N-s}{N}\bar{G}_{\iota}(f)\|_2>  C_G\left( 1+\frac{1}{L}\log(\frac{|\mathbb{T}_{grid}|}{\epsilon}) \right)^{-\frac{1}{2}}\right\}\nonumber
\end{align}
for any $f\in \mathcal{G}$, $\iota\in \{0,1,2,3\}$, and constant $C_G$, has probability bounded by $\epsilon/5$.
\end{lem}
\begin{proof}
    We need the vector Bernstein inequality to prove this lemma.
    \begin{thm}[Vector Bernstein inequality \cite{candes2011probabilistic}]\label{theorem3}
        Let $\mathcal{P}\subset \mathbb{R}^d$ be a finite sequence of independent zero-mean random vectors with $\|\boldsymbol p\|_2\leq B \quad a.s.$ and $\sum_{\boldsymbol p\in \mathcal{P}}\mathbb{E} \|\boldsymbol p\|_2^2\leq \sigma^2$ for all $\boldsymbol p\in \mathcal{P}$, where $B$ and $\sigma^2$ are both positive constants. Then,
        $$\mathbb{P}\left\{ \|\sum_{\boldsymbol p\in\mathcal{P}}\boldsymbol p\|_2\geq t  \right\}\leq e^{-\frac{t^2}{8\sigma^2}+\frac{1}{4}}$$
        for $0\leq t\leq \frac{\sigma^2}{B}$.
    \end{thm}

Using the definition of $K$ and $\bar{K}$, we can respectively rewrite $\boldsymbol G_{\iota}(f)$ and $\bar{\boldsymbol G}_{\iota}(f)$ as
$$\boldsymbol G_{\iota}(f) = \sum_{l=-m}^m \delta_{\Omega^c}(l)(i2\pi\kappa l)^{\iota}c_le^{i2\pi l f}\boldsymbol \nu(l) ,\bar{G}_{\iota}(f)=\sum_{l=-m}^m (i2\pi\kappa l)^{\iota}c_le^{i2\pi l f}\boldsymbol \nu(l).$$
Note that by defining
$$\boldsymbol p(\iota,l):= (\delta_{\Omega^c}(l)-p)(i2\pi \kappa l)^{\iota}c_le^{i2\pi lf}\boldsymbol \nu(l),$$
where $p=\frac{N-s}{N}$ (parameter of i.i.d Bernoulli random variables $\delta_{\Omega^c}(-m),...,\delta_{\Omega^c}(m)$) we have
$\boldsymbol G_{\iota}(f)-p\bar{\boldsymbol G}_{\iota}(f)=\sum _{l=-m}^m \boldsymbol p(l)$. Also, using Lemmas 3.3, 3.4, and 3.5 from \cite{fernandez2017demixing}, we get
$$\|\boldsymbol p (\iota,l)\|_2\leq B:= \frac{128\sqrt{K}}{m},\sum_{l=-m}^m\mathbb{E}(\| \boldsymbol p(\iota,l)\|_2^2)\leq \sigma^2:=\frac{3.25 10^4K}{m}.$$
Now, using Theorem \ref{theorem3}, we have
$$\mathbb{P}\left\{  \sup_{f\in\mathcal{G}}\|\boldsymbol G_{\iota}(f)-p\bar{\boldsymbol G}_{\iota}(f) \|_2\geq t ,\quad \iota=\{0,1,2,3\}\right\}\leq 4|\mathbb{T}_{grid}|e^{-\frac{t^2}{8\sigma^2}+\frac{1}{4}}.$$
To make the r.h.s. smaller than $\epsilon/5$, take
$$t:= \sqrt{\frac{26\times 10^4 K}{m}(\frac{1}{4}+\log(\frac{20|\mathbb{T}_{grid}|}{\epsilon}))},$$
which is a valid choice since
\begin{align}
    \frac{t}{\sigma}&=\sqrt{8(\frac{1}{4}+\log(\frac{20*|\mathbb{T}_{grid}|}{\epsilon}))}\leq \sqrt{74+24\log(N)+4\log(L)+8\log(\frac{1}{\epsilon})}\nonumber\\
    &\leq \sqrt{74+44\log(N)}+\sqrt{8\log(\frac{1}{\epsilon})}
  \leq 0.452\sqrt{N}\sqrt{8\log(\frac{1}{\epsilon})}\leq 0.46\sqrt{N},\nonumber
\end{align}
where we have used $\sqrt{74+44\log(N)}\leq 0.452\sqrt{N}$ and assumed $N\geq 2 \times 10^3$, $L\leq N^5$, and either $K\geq 1$ or $s\geq 1$. Thus, $t/\sigma\leq 0.46\sqrt{N}\leq \sigma/B$. The desired result holds as long as
$$C_G(1+ \frac{1}{L}\log(\frac{|\mathbb{T}_{grid}|}{\epsilon}))^{-\frac{1}{2}} \geq t\geq \sqrt{\frac{2\times 10^3 K}{N}(\frac{1}{4}+\log(\frac{8\times 10^3 \sqrt{L}N^3}{\epsilon}))}$$
with $C_K$ small enough.
\end{proof}
}

Using  Lemmas \ref{lem7}, H.8 and Corollary H.9 from \cite{fernandez2017demixing} with respect to the new bounds for $s$ and $K$ and the first condition of Theorem \ref{theorem1} combined with Lemma \ref{lem5}, we find tight bounds for (\ref{eqn36}) as
\begin{align}
&  \frac{1}{\sqrt{N}}\|\boldsymbol B_{\boldsymbol \Omega}\|\|\boldsymbol D^{-1}\|\|\boldsymbol G_{\iota}(f)\|_2 \leq\frac{8(C_{\bar{\boldsymbol \nu}}+C_{{\boldsymbol \nu}})\|\boldsymbol B_{\boldsymbol \Omega}\|}{\sqrt{N}} \nonumber\\
&  \leq\frac{8(C_{\bar{\boldsymbol \nu}}+C_{{\boldsymbol \nu}})C_{\boldsymbol B}\left(1+\frac{1}{L}\log \frac{|\mathbb{T}_{grid}|}{\epsilon}\right)^{-\frac{1}{2}}\sqrt{N}}{\sqrt{N}}\:\:\:,C_{\boldsymbol B}=\frac{C_{\mathcal{U}}}{8\left(C_{\bar{\boldsymbol \nu}}+C_{{\boldsymbol \nu}}\right)}\nonumber.
\end{align}
Thus, by setting ${t=\frac{10^{-2}}{8}}$ and using Lemma \ref{lem4} and the union bound, we obtain
\begin{align}
&\mathbb{P}\{\underset{f\in\mathcal{U}}{sup}\|\frac{1}{\sqrt{N}}\boldsymbol G_{\iota}^T(f)\boldsymbol D^{-1}\boldsymbol B_{\boldsymbol \Omega}\|_2\geq \frac{10^{-2}}{8}\}\nonumber\\ &\leq |\mathbb{T}_{grid}|e^{-L\left(\frac{{\frac{10^{-2}}{8}}^2}{C_{\mathcal{U}}^2(1+\frac{1}{L}\log\frac{|\mathbb{T}_{grid}|}{\epsilon})^{-1}}-\log\frac{{\frac{10^{-2}}{8}}^2}{C_{\mathcal{U}}^2(1+\frac{1}{L}\log\frac{|\mathbb{T}_{grid}|}{\epsilon})^{-1}}-1\right)}.\label{eqn37}
\end{align}
With the same reasoning for $\varepsilon_1$ and small enough $C_{\mathcal{U}}$, the event
\begin{align}
\varepsilon_2 & :=\{\|\frac{1}{\sqrt{N}}\boldsymbol G_{\iota}^T(f)\boldsymbol D^{-1}\boldsymbol B_{\boldsymbol \Omega}\|_2\geq \frac{10^{-2}}{8}\:\:\: for\:\: all\:\: f\in \mathcal{U}\}\nonumber
\end{align}
holds with probability at most ${\epsilon/20}$ under the assumptions of Proposition\ref{prop1}.
{ \subsubsection{Bound on $\varepsilon_3$}}
For the third term, we consider ${\boldsymbol \Psi=\boldsymbol \Phi}$ and
\begin{align}
\boldsymbol \omega & =\boldsymbol P\boldsymbol D^{-1}\left(\boldsymbol G_{\iota}(f)-\frac{N-s}{N}\bar{\boldsymbol G}_{\iota}(f)\right),\nonumber
\end{align}
where ${\boldsymbol P\in\mathbb{R}^{K\times 2K}}$ is a projection matrix, which selects the first ${K}$ elements in a vector and ${\|\boldsymbol P\|=1}$. According to Lemmas \ref{lem7} and \ref{lem8} we can write
\begin{align}
&  \|\boldsymbol P\boldsymbol D^{-1}\left(\boldsymbol G_{\iota}(f)-\frac{N-s}{N}\bar{\boldsymbol G}_{\iota}(f)\right)\|_2 \leq \|\boldsymbol P\|\|\boldsymbol D^{-1}\|\|\boldsymbol G_{\iota}(f)-\frac{N-s}{N}\bar{\boldsymbol G}_{\iota}(f)\|_2\nonumber\\
& \leq 8\|\boldsymbol G_{\iota}(f)-\frac{N-s}{N}\bar{\boldsymbol G}_{\iota}(f)\|_2\leq C_{\mathcal{U}}\left(1+\frac{1}{L}\log\frac{|\mathbb{T}_{grid}|}{\epsilon}\right)^{-\frac{1}{2}}.\nonumber
\end{align}
By setting ${t=\frac{10^{-2}}{8}}$ and applying (\ref{eqn35}) and the union bound, we have
\begin{align}
&\mathbb{P}\{\underset{f\in\mathcal{U}}{sup}\|(\boldsymbol G_{\iota}(f)-\frac{N-s}{N}\bar{\boldsymbol G}_{\iota}(f))^T\boldsymbol D^{-1}\left[
\begin{array}{c}
\boldsymbol \Phi \\
\boldsymbol 0 \\
\end{array}
\right]\|_2\geq \frac{10^{-2}}{8}\}\nonumber\\ &\leq |\mathbb{T}_{grid}|e^{-L\left(\frac{{\frac{10^{-2}}{8}}^2}{C_{\mathcal{U}}^2(1+\frac{1}{L}\log\frac{|\mathbb{T}_{grid}|}{\epsilon})^{-1}}-\log\frac{{\frac{10^{-2}}{8}}^2}{C_{\mathcal{U}}^2(1+\frac{1}{L}\log\frac{|\mathbb{T}_{grid}|}{\epsilon})^{-1}}-1\right)}.\label{eqn38}
\end{align}
Therefore, with the same reasoning for $\varepsilon_1$ and $\varepsilon_2$, the event
\begin{align}
\varepsilon_3  :=\{\|(\boldsymbol G_{\iota}(f)-\frac{N-s}{N}\bar{\boldsymbol G}_{\iota}(f))^T\boldsymbol D^{-1}&\left[
\begin{array}{c}
\boldsymbol \Phi \\
\boldsymbol 0 \\
\end{array}
\right]\|_2\geq \frac{10^{-2}}{8}\:\:\: for\:\: all\:\: f\in \mathcal{U}\}\nonumber
\end{align}
holds with probability at most ${\epsilon/20}$ under the assumptions of Proposition\ref{prop1}.
{ \subsubsection{Bound on $\varepsilon_4$}}
 At last, one can bound the fourth term in (\ref{eqn34}) by considering ${\boldsymbol \Psi=\boldsymbol \Phi}$ and
\begin{align}
& \boldsymbol \omega=\frac{N-s}{N}\boldsymbol P\left(\boldsymbol D^{-1}-\frac{N}{N-s}\bar{\boldsymbol D}^{-1}\right)\bar{\boldsymbol G}_{\iota}(f).\nonumber
\end{align}
Using the  Lemmas \ref{lem7},\ref{lem8} we get
\begin{align}
& \|\frac{N-s}{N}\boldsymbol P\left(\boldsymbol D^{-1}-\frac{N}{N-s}\bar{\boldsymbol D}^{-1}\right)\bar{\boldsymbol G}_{\iota}(f)\|_2\leq \nonumber\\
& \|\boldsymbol P\|\|\boldsymbol D^{-1}-\frac{N}{N-s}\bar{\boldsymbol D}^{-1}\|\|\bar{\boldsymbol G}_{\iota}(f)\|_2\leq \nonumber\\
& C_{\bar{\boldsymbol \nu}}\|\boldsymbol D^{-1}-\frac{N}{N-s}\bar{\boldsymbol D}^{-1}\|\leq C_{\mathcal{U}}\left(1+\frac{1}{L}\log\frac{|\mathbb{T}_{grid}|}{\epsilon}\right)^{-\frac{1}{2}}.\nonumber
\end{align}
By applying (\ref{eqn35}) and the union bound and setting ${t=\frac{10^{-2}}{8}}$, one can write
\begin{align}
&\mathbb{P}\{\underset{f\in\mathcal{U}}{sup}\|\frac{N-s}{N}\bar{\boldsymbol G}_{\iota}(f)^T(\boldsymbol D^{-1}-\frac{N}{N-s}\bar{\boldsymbol D}^{-1})\left[
\begin{array}{c}
\boldsymbol \Phi \\
\boldsymbol 0 \\
\end{array}
\right]\|_2\geq \frac{10^{-2}}{8}\}\nonumber\\ &\leq |\mathbb{T}_{grid}|e^{-L\left(\frac{{\frac{10^{-2}}{8}}^2}{C_{\mathcal{U}}^2(1+\frac{1}{L}\log\frac{|\mathbb{T}_{grid}|}{\epsilon})^{-1}}-\log\frac{{\frac{10^{-2}}{8}}^2}{C_{\mathcal{U}}^2(1+\frac{1}{L}\log\frac{|\mathbb{T}_{grid}|}{\epsilon})^{-1}}-1\right)}.\label{eqn39}
\end{align}
Therefore, with the same reasoning for $\varepsilon_1$, $\varepsilon_2$, and $\varepsilon_3$ the event
\begin{align}
\varepsilon_4  :=\{\|\frac{N-s}{N}\bar{\boldsymbol G}_{\iota}(f)^T(\boldsymbol D^{-1}-\frac{N}{N-s}\bar{\boldsymbol D}^{-1})&\left[
\begin{array}{c}
\boldsymbol \Phi \\
\boldsymbol 0 \\
\end{array}
\right]\|_2\geq \frac{10^{-2}}{8}\nonumber\\
&\:\:\: for\:\: all\:\: f\in \mathcal{U}\}\nonumber
\end{align}
holds with probability at most ${\epsilon/20}$ under the assumptions of Proposition\ref{prop1}. Thus, using (\ref{eqn351}),(\ref{eqn37}),(\ref{eqn38}),(\ref{eqn39}), and the triangle inequality, we conclude that
\begin{align}\label{eqn40}
& \underset{f\in\mathcal{U}}{sup}\|\kappa^{\iota}\boldsymbol Q^{(\iota)}(f)-\kappa^{\iota}\bar{\boldsymbol Q}^{(\iota)}(f)\|_2\leq\frac{10^{-2}}{2}
\end{align}
holds with probability at least ${1-\epsilon/5}$ under the assumptions of Proposition\ref{prop1}. Next, using Bernstein polynomial inequality\cite{schaeffer1941inequalities}, we extend the results to the continuous domain ${[0,1]}$. Considering ${f\in[0,1]}$ and ${f_g\in\mathcal{G}}$, we have
\begin{align}
&  \|\kappa^{\iota}\boldsymbol Q^{(\iota)}(f)-\kappa^{\iota}\bar{\boldsymbol Q}^{(\iota)}(f)\|_2\leq\|\kappa^{\iota}\boldsymbol Q^{(\iota)}(f_g)-\kappa^{\iota}{\boldsymbol Q}^{(\iota)}(f)\|_2 \nonumber\\
&+\|\kappa^{\iota}\bar{\boldsymbol Q}^{(\iota)}(f_g)-\kappa^{\iota}{\boldsymbol Q}^{(\iota)}(f_g)\|_2+\|\kappa^{\iota}\bar{\boldsymbol Q}^{(\iota)}(f)-\kappa^{\iota}\bar{\boldsymbol Q}^{(\iota)}(f_g)\|_2. \nonumber
\end{align}
Then, consider the third term in the right side of the above inequality. We had ${\bar{\boldsymbol Q}^{(\iota)}(f)\in\mathbb{C}^{1\times L}}$ and for any ${\boldsymbol v\in\mathbb{C}^{1\times L}}$,${\|\boldsymbol v\|\leq\sqrt{L}\|\boldsymbol v\|_{\infty}}$. The ${j}$th entry of ${\bar{\boldsymbol Q}^{(\iota)}(f)}$ is
\begin{align}
|\kappa^{\iota}\bar{\boldsymbol Q}_j^{(\iota)}(f)| \leq |<\bar{\boldsymbol D}^{-1}\bar{\boldsymbol G}_{\iota}(f),\boldsymbol \Phi_{:,j}>|  \leq8\sqrt{K}\left(256\sqrt{K}\right)=CK\leq CN^2.\nonumber
\end{align}
Next, take ${\kappa^{\iota}\bar{\boldsymbol Q}_j^{(\iota)}(f)}$ as a polynomial of ${z=e^{-i2\pi f}}$ with degree ${m}$ and apply the Bernstein polynomial inequality as
\begin{align}
 &|\kappa^{\iota}\bar{\boldsymbol Q}_j^{(\iota)}(f)-\kappa^{\iota}\bar{\boldsymbol Q}_j^{(\iota)}(f_g)|\leq |e^{-i2\pi f}-e^{-i2\pi f_g}|\underset{z}{sup}\left|\frac{d\kappa^{\iota}\bar{\boldsymbol Q}_j^{(\iota)}(z)}{dz}\right| \nonumber\\
& \leq |e^{-i\pi(f+f_g)}2sin(\pi(-f+f_g))|m\underset{f}{sup}|\kappa^{\iota}\bar{\boldsymbol Q}_j^{(\iota)}(f)| \leq CN^3|f-f_g|.\nonumber
\end{align}
Thus,
\begin{align}
& \|\kappa^{\iota}\bar{\boldsymbol Q}^{(\iota)}(f)-\kappa^{\iota}\bar{\boldsymbol Q}^{(\iota)}(f_g)\|_2\leq\sqrt{L}\|\kappa^{\iota}\bar{\boldsymbol Q}^{(\iota)}(f)-\kappa^{\iota}\bar{\boldsymbol Q}^{(\iota)}(f_g)\|_{\infty} \leq C\sqrt{L}N^3|f-f_g|.\nonumber
\end{align}
The above calculations reveal that the grid size $|\mathbb{T}_{grid}|=1/|f-f_g|$ should be such that ${|f-f_g|\leq\frac{10^{-2}}{4C\sqrt{L}N^3}}$. Using the same arguments, one can obtain the same bound for ${ \|\kappa^{\iota}{\boldsymbol Q}^{(\iota)}(f)-\kappa^{\iota}{\boldsymbol Q}^{(\iota)}(f_g)\|_2}$. Combining the above results with (\ref{eqn40}) proves the lemma.
\subsection{Proof of Lemma\ref{lemma3}}
Consider ${A_{far}=[0,1]\backslash A_{near}}$, where ${\ A_{near}}$ is defined in Lemma \ref{lemma3}. We prove that ${\|\bar{\boldsymbol Q}(f)\|_2<0.99}$ in ${A_{far}}$. Next, it is shown that ${\|\boldsymbol Q(f)\|_2<1}$ in ${A_{near}}$. For ${\|\bar{\boldsymbol Q}(f)\|_2}$, we write
\begin{align}
\|\bar{\boldsymbol Q}(f)\|_2&\leq \sum_{f_k\in\mathbb{T}}^{ }\|\boldsymbol \alpha_k\|_2|\bar{K}(f-f_k)|+\sum_{f_k\in\mathbb{T}}^{}\kappa\|\boldsymbol \beta_k\|_2|\bar{K}^{\prime}(f-f_k)| \nonumber\\
& \leq \|\boldsymbol \alpha\|_{\infty,2}\sum_{f_k\in\mathbb{T}}^{ }|\bar{K}(f-f_k)|+ \|\boldsymbol \beta\|_{\infty,2}\sum_{f_k\in\mathbb{T}}^{}\kappa|\bar{K}^{\prime}(f-f_k)|.\nonumber
\end{align}
Using Lemma H.10 from \cite{fernandez2017demixing}, we have
 $\sum_{j=1}^{K}\kappa^{\iota}|\bar{K}^{(\iota)}(f-f_j)|\leq127C_1+2.42C_2\nonumber$
for some properly chosen ${C_1}$ and ${C_2}$. Thus,
\begin{align}
& \|\bar{\boldsymbol Q}(f)\|_2\leq(\|\boldsymbol \alpha\|_{\infty,2}+\|\boldsymbol \beta\|_{\infty,2})(127C_1+2.42C_2).\nonumber
\end{align}
In the following, we calculate the upper bounds for ${\|\boldsymbol \alpha\|_{\infty,2}}$ and ${\|\boldsymbol \beta\|_{\infty,2}}$. Recall (\ref{eqn26}) for the deterministic case. Using this equation, we have
\begin{align}
& \left[
\begin{array}{c}
\boldsymbol \alpha \\
\boldsymbol \beta \\
\end{array}
\right]=\left[
\begin{array}{c}
\boldsymbol I \\
\bar{\boldsymbol D}_2^{-1}\bar{\boldsymbol D}_1 \\
\end{array}
\right]\bar{\boldsymbol D}_3^{-1}\boldsymbol \Phi,\nonumber
\end{align}
where ${\bar{\boldsymbol D}_3\triangleq\bar{\boldsymbol D}_0+\bar{\boldsymbol D}_1\bar{\boldsymbol D}_2^{-1}\bar{\boldsymbol D}_1}$. According to Lemma 4.1 from \cite{fernandez2016super} and the fact that ${\|\boldsymbol \Phi\|_{\infty,2}=1}$, we have$ \|\boldsymbol \alpha\|_{\infty,2}=\|\bar{\boldsymbol D}_3^{-1}\boldsymbol \Phi\|_{\infty,2}\leq1+2.37\times 10^{-2}$
and
 $\|\boldsymbol \beta\|_{\infty,2}\leq\|\bar{\boldsymbol D}_2^{-1}\bar{\boldsymbol D}_1\bar{\boldsymbol D}_3^{-1}\boldsymbol \Phi\|_{\infty,2} \leq \frac{4.247}{m}\times 10^{-2} .$
Therefore, by the proper choices of ${C_1}$ and ${C_2}$ we get $${\|\bar{\boldsymbol Q}(f)\|_2< 0.99\:\:\: for\:f\in A_{far}}.$$ In order to show that ${\|\boldsymbol Q(f)<1\|_2}$ in ${A_{near}}$, it is enough to show that the second derivative of ${\|\boldsymbol Q(f)<1\|_2}$ is negative in ${A_{near}}$. In a mathematical fashion, it is enough to prove the following inequality,
\begin{align}
& \frac{1}{2}\frac{d^2\|\boldsymbol Q(f)\|_2}{df^2} =\|\boldsymbol Q^{\prime}\|_2^2+Re\{\boldsymbol Q^{\prime \prime}\boldsymbol Q^H(d)\}<0.\label{eqn41}
\end{align}
Now, we investigate each term in the above inequality. For the first term, we can write
\begin{align}
& \|\kappa\boldsymbol Q^{\prime}(f)\|_2^2=\|\kappa\boldsymbol Q^{\prime}(f)-\kappa\bar{\boldsymbol Q}^{\prime}(f)+\kappa\bar{\boldsymbol Q}^{\prime}(f)\|_2^2 \nonumber\\
& \leq 10^{-4}+2\times 10^{-2}\|\kappa \bar{\boldsymbol Q}^{\prime}(f)\|_2+\|\kappa\bar{\boldsymbol Q}^{\prime}(f)\|_2^2,\nonumber
\end{align}
which by applying the kernel bounds of \cite{fernandez2016super} leads to
\begin{align}
 \|\kappa \bar{\boldsymbol Q}^{\prime}(f)\|_2 &\leq\|\boldsymbol \alpha\|_{\infty,2}\sum_{k=1}^{K}\kappa|\bar{K}^{\prime}(f-f_k)|+\|\boldsymbol \beta\|_{\infty,2}\sum_{k=1}^{K}\kappa^2|\bar{K}^{\prime \prime}(f-f_k)| \nonumber\\
&\leq 1.0237\times 2.409\times 10^{-2}+\frac{4.247\times10^{-2}}{m}(0.087)\leq0.0247,\nonumber
\end{align}
where the last inequality is achieved using ${m\geq10^3}$. The second term of (\ref{eqn41}) can be represented as
\begin{align}
&Re\left\{\kappa^2\boldsymbol Q^{\prime \prime}(f)\boldsymbol Q^H(f)\right\}=Re\left\{\kappa^2(\boldsymbol Q^{\prime \prime}(f)-\bar{\boldsymbol Q}^{\prime \prime}(f))\boldsymbol Q^H(f)\right\} \nonumber\\
& +Re\left\{\kappa^2\bar{\boldsymbol Q}^{\prime \prime}(f)(\boldsymbol Q(f)-\bar{\boldsymbol Q}(f))^H\right\}+Re\left\{\kappa^2\bar{\boldsymbol Q}^{\prime \prime}(f)\bar{\boldsymbol Q}^H(f)\right\} \nonumber\\
&  \leq 0.0101+ 0.01+Re\left\{\kappa^2\bar{\boldsymbol Q}^{\prime \prime}(f)\bar{\boldsymbol Q}^H(f)\right\}.\nonumber
\end{align}
Next, we inspect the term ${\kappa^2\bar{\boldsymbol Q}^{\prime \prime}(f)\bar{\boldsymbol Q}^H(f)}$. According to (\ref{eqn28}), we get  $${\kappa^2\bar{\boldsymbol Q}^{\prime \prime}(f)\bar{\boldsymbol Q}^H(f)=\kappa^2\bar{Q}^{\prime \prime}(f)\boldsymbol b^H\boldsymbol b\bar{ Q}^{\ast}(f)=\kappa^2\bar{Q}^{\prime \prime}(f)\bar{ Q}^{\ast}(f)},$$
which is a scalar value. Also, note that
\begin{align}
& \kappa^2Re \left\{\bar{Q}^{\prime \prime}(f)\bar{ Q}^{\ast}(f)\right\}=\kappa^2\left(\bar{Q}_R^{\prime\prime}(f)\bar{Q}_R(f)+|\bar{Q}_I^{\prime\prime}(f)||\bar{Q}_I|\right) \nonumber\\
& \leq \left(-0.8915\times 2.015+0.0474\times 2.555 \right)\leq-1.6752.\nonumber
\end{align}
Thus,
\begin{align}
& \frac{\kappa^2}{2}\frac{d^2\|\boldsymbol Q(f)\|_2}{df^2}\leq-1.6752+0.0201+12.01\times 10^{-4}<0\nonumber
\end{align}
and the proof is complete.

\bibliography{main_R4blackr2}

\end{document}